\newtheorem{thm}{Theorem}
\newtheorem{problem}{Problem}
\newtheorem{lem}{Lemma}
\newtheorem{rem}{Remark}
\newtheorem{property}{Property}
\newtheorem{examp}{Example}
\newtheorem{claim}{Claim}
\DeclareMathOperator{\vecc}{\mathbf{vec}}		
\DeclareMathOperator{\argmin}{arg\,min}
\DeclareMathOperator{\argmax}{arg\,max}
\newcommand{\bc}[1]{\mathbb{B}\textbf{(}#1\textbf{)}}
\newcommand{\mylabel}[2]{#1#2}
\newcommand{\blue}[1]{\textcolor{black}{#1}}
\newcommand{\bblue}[1]{\textcolor{blue}{#1}}
\newcommand{\red}[1]{\textcolor{red}{#1}}
\newcommand{\black}[1]{\textcolor{black}{#1}}
\def\BState{\State\hskip-\ALG@thistlm}
\renewcommand{\ALG@beginalgorithmic}{\small}
\begin{document}

\title{An Approximation Algorithm for Optimal Clique Cover Delivery in Coded Caching}


\author{Seyed Mohammad Asghari, Yi Ouyang, Ashutosh Nayyar,~\IEEEmembership{Senior Member,~IEEE,} \\
 and A. Salman Avestimehr,~\IEEEmembership{Senior Member,~IEEE} 
 \thanks{This paper was presented in part at the proceedings of International Conference on Communications (ICC), 2018 \cite{Asghari_ICC_2018}.
 }
\thanks{S. M. Asghari, A. Nayyar, A. S. Avestimehr are with the Department of Electrical
Engineering, University of Southern California, Los Angeles, CA.
Y. Ouyang is currently with Preferred Networks America, Inc., Berkeley, CA, USA. (e-mail: asgharip@usc.edu; ouyangyii@gmail.com; ashutosn@usc.edu; avestimehr@ee.usc.edu).}
}



%

\maketitle
\begin{abstract}
Coded caching can significantly reduce the communication bandwidth requirement for satisfying users' demands by utilizing the multicasting gain among multiple users. Most existing works assume that the users follow the prescriptions for content placement made by the system. However, users may prefer to decide what files to cache. To address this issue, 
we consider a network consisting of a file server connected through a shared link to $K$ users, \black{each equipped with a cache which has been already filled arbitrarily}. Given an arbitrary content placement, the goal is to find a delivery strategy for the server that minimizes the load of the shared link. \black{In this paper, we focus on a specific class of coded multicasting delivery schemes known as the "clique cover delivery scheme"}. We first formulate the optimal \black{clique cover delivery problem} as a combinatorial optimization problem. Using a connection with the weighted set cover problem, we propose an approximation algorithm and show that it provides an approximation ratio of $(1 + \log K)$, while the approximation ratio for the existing coded delivery schemes is linear in $K$. Numerical simulations show that our proposed algorithm provides a considerable bandwidth reduction over \blue{the} existing coded delivery schemes for almost all content placement schemes.
\end{abstract}

\begin{IEEEkeywords}
Caching, Coded Multicast, Clique Cover Delivery, Weighted Set Cover, Approximation Algorithm.
\end{IEEEkeywords}

\section{Introduction}

A common way to reduce the burden of traffic in a network system is to take advantage of memories distributed across the network to duplicate parts of the content. This duplication of content is called content placement or caching. A caching system operates in two phases: the placement phase and the delivery phase. In the placement phase which is performed during off-peak hours when network resources are abundant, users fill the local caches with contents anticipating their future demands.
Afterwards, the network is used for an arbitrarily long time, referred to as the delivery phase. This phase can contain a number of rounds where \black{in each round}, users reveal their requests for content and the server must coordinate transmissions such that these requests are satisfied.

Recently, a new class of caching schemes in which the placement and delivery phases are jointly designed (a.k.a "coded caching") has drawn remarkable attention \cite{maddah2014fundamental,maddah2015decentralized}. It has been shown that coded caching can significantly reduce the communication bandwidth requirement for satisfying \black{the} users' requests. Since then, design and analysis of caching techniques for various kinds of networks have been researched extensively 
\cite{niesen2017coded, Zhang_diff_sizes_2015, ji2014order,Qian_exact_rate, Qian_trade_off, Amiri_2016, Wang_2015,karamchandani2016hierarchical, poularakis2016,pedarsani2016online, niesen2015coded, ji2015efficient}.
Most existing works assume that the users follow the prescriptions made by the system for \black{the} content placement. However, it is difficult to enforce all users to follow the particular prescriptions because each user may have its own caching policy or may decide to discard some part of their caches due to lack of space. When the users arbitrarily fill the content of their caches, the existing coded delivery schemes \cite{maddah2014fundamental,maddah2015decentralized, niesen2017coded, ji2015efficient, Zhang_diff_sizes_2015, hueristics_index_coding,ji2014order} could result in very inefficient performance\footnote{\black{Note that although \cite{Zhang_diff_sizes_2015} studies the coded caching problem for the case that the file sizes can be different, the proposed delivery algorithm of this paper is exactly the same as the delivery algorithm of \cite{maddah2014fundamental,maddah2015decentralized, niesen2017coded}.}} (see Example \ref{first_example} in Section \ref{sys_model_and_prob_formulation} and also Theorem \ref{thm:perfromance_SID} for the performance of these schemes under arbitrary caching of users). 

In this paper, we focus on the delivery phase and study a caching system where the content placement phase has been already carried out by the users arbitrarily and has been reported to the server. More specifically, we consider a network consisting of a file server connected through a shared link to $K$ users, each equipped with a cache which has been already filled. In the delivery phase, the users request a set of files. In order to take advantage of multicasting opportunities among the users, each requested file can be divided into a set of subfiles where each subfile is available only in the cache of a group of users \cite{maddah2014fundamental,maddah2015decentralized, niesen2017coded}. The goal is to design a delivery strategy for the server that minimizes the load of the shared link \black{without imposing any restriction on the number of users, size of the files, cache size of the users, and how the placement phase has been carried out.}

This problem is equivalent to a conventional "index coding" problem. In our problem, the total number of subfiles that the server needs to send can be exponential in the number $K$ of users and further, the size of subfiles can be distinct in general. Therefore, existing algorithms \cite{Birk:2006, index_coding_2011,index_coding_LP} for the conventional index coding problem and the recent algorithms for the index coding problems with interlinked cycles \cite{thapa2017interlinked} and symmetric neighboring interference \cite{vaddi2017capacity} are not appropriate for our setting.

Since an index coding problem is computationally hard to solve even only approximately \cite{hardness_network_coding}, in this paper we focus on a specific class of coded multicasting delivery schemes for the server known as the "clique cover delivery scheme". \black{Interestingly, many of the previous works on caching have relied on this class of delivery schemes (for example, see \cite{ji2014order, Shanmugam2016, naderi_ICC_2017, naderi_IT_2017}) due to its practical appeal and its capability in reducing the communication bandwidth compared to uncoded delivery schemes.} In \black{the} clique cover schemes, when a set of subfiles of different files are XORed as a packet and transmitted to the users, for every subfile available in this packet, at least one user requesting it can recover this subfile by using its cache contents and only this XOR transmission. XORing subfiles with different sizes means that
all shorter subfiles are zero padded to match the longest subfile and then XORed. 
If \black{sub-packetization is} allowed, instead of zero padding the shorter subfiles, these subfiles can be padded with bits from other subfiles. Although this kind of schemes could further reduce \black{the} bandwidth usage compared to \black{the} clique cover delivery schemes, they require the knowledge of how the \black{sub-packetization should be performed}. Finding the optimal way of \black{sub-packetization} is formidable in general and only heuristic methods are available \cite{Ramakrishnan_2015, Wan_2017}.

\subsection{Our Contribution}
In this paper, we first formulate the optimal \black{clique cover delivery problem} as a combinatorial optimization problem and show that it can be represented as an Integer Linear Program (ILP). However, the number of variables of this ILP is \black{equal to the number of all cliques of the "side-information graph" \cite{Birk:2006, index_coding_2011,index_coding_LP, Shanmugam2016,ji2014order} (see Remark \ref{rem:side_information} for the detailed description of this graph) generated from the set of all subfiles that the server needs to send.} Since the number of all cliques is generally double exponential in the number $K$ of users, directly solving the ILP is computationally intractable even for a small number of users. 

To overcome the double exponential complexity, we focus on approximation algorithms. We show that the optimal \black{clique cover delivery problem} is equivalent to the weighted set cover problem. Using this connection, we first propose an approximation algorithm which is based on an approximation algorithm for the weighted set cover problem. This algorithm has a good approximation ratio, but its complexity is still double exponential in the number $K$ of users \black{as it requires finding all possible cliques of the side-information graph and also searching over all of these cliques to find a set of cliques that the server should send}. 

Then, we identify features of the optimal \black{clique cover delivery problem} to reduce redundancy in this algorithm and propose Size-Aware Coded Multicast (SACM) algorithm. \black{The significance of SACM algorithm is that it sidesteps the difficulty of finding all possible cliques of the side-information graph and further, it finds the set of cliques that the server should send without the need for searching over all possible cliques.} 
The SACM algorithm has a complexity similar to \black{the} coded delivery schemes of \cite{maddah2014fundamental,maddah2015decentralized,niesen2017coded,ji2014order,hueristics_index_coding,Zhang_diff_sizes_2015} which is generally exponential in the number $K$ of users. We further show that the exponential complexity in \black{the} number $K$ of users is inevitable for any algorithm with good approximation ratio for the optimal \black{clique cover delivery problem}. In terms of performance, we show that SACM algorithm provides $(1 + \log K)$-approximation\footnote{\black{An algorithm is an $\alpha$-approximation to a problem if for any instance of this problem the solution returned by this algorithm is within a factor $\alpha$ of the optimal solution\cite{Combinatorial_Optimization_Korte}.}} 
 which is significantly better than the linear (in $K$) approximation ratio of the existing coded delivery schemes of \cite{maddah2014fundamental,maddah2015decentralized,niesen2017coded,ji2014order,hueristics_index_coding, ji2015efficient,Zhang_diff_sizes_2015}. Furthermore, numerical simulations show that our proposed algorithm provides a considerable bandwidth reduction over the existing coded delivery schemes of \cite{maddah2014fundamental,maddah2015decentralized,niesen2017coded,ji2014order,hueristics_index_coding, ji2015efficient,Zhang_diff_sizes_2015} for almost all content placement schemes.

\subsection{Notation}
For each file $W$, we denote the number of bits or size of $W$ by $\bc{W}$. For two files $W_1$ and $W_2$, bit-wise XOR of $W_1$ and $W_2$ is denoted by $W_1 \oplus W_2$ where the files $W_1$ and $W_2$ are assumed
to be zero padded to match the longest file. 
{\color{black} Sets are denoted by the calligraphic font and sets of sets are denoted by script font. The cardinalities of set $\mathcal{A}$ and set of sets $\mathscr{A}$ are denoted by $|\mathcal{A}|$ and $|\mathscr{A}|$, respectively.}
We use $\vecc(\mathcal{A})$ to denote a column vector with the elements of set $\mathcal{A}$. For a number $N$, we use $[N]$ to denote the set $\{1,2,\ldots,N\}$.

\subsection{Organization}
The rest of this paper is organized as follows. Section \ref{sys_model_and_prob_formulation} describes the system model and 
formally formulates the optimal \black{clique cover delivery problem}. In Section \ref{sec:approximation}, we propose an approximation algorithm for solving the optimal \black{clique cover delivery problem}. Section \ref{clique_cover_description} analyzes the complexity of our proposed algorithm. In Section \ref{sec:num_exp}, we numerically compare our proposed algorithm with the existing coded delivery algorithms. Section \ref{sec:conc} concludes the paper. The proofs of all the technical results of the paper appear in the Appendices.
\section{System Model and Problem Formulation}
\label{sys_model_and_prob_formulation}

\subsection{System Model}
\label{model}
We consider a system with one server connected through a shared, error-free link to $K$ users as shown in Fig. \ref{fig:model}. 
The server has access to a database of $N$ popular files 
$W_1,W_2,\ldots,W_N$ where the size of file $W_n$ in bits is denoted by $\bc{W_n}$. We use the notations $[K]:= \{1,2,\ldots,K\}$ and $[N]:= \{1,2,\ldots,N\}$.
Each user $k$ has a separate cache memory $Z_k$. We assume that users have filled the content of their caches using the database in the \textit{Placement phase}. 

In the \textit{Delivery phase}, each user $k$ requests a file index $d_k \in [N]$. 
We assume that users request different files\footnote{Note that the probability of each user requesting a distinct file goes to one as $N \longrightarrow \infty$ for a fixed number $K$ of users. Hence, this assumption holds with high probability because it is likely that $N \gg K$ in practice. Further, note that the proposed algorithm of this paper can still be applied to the case where there is repetition in the users' requests by pretending that they are different.}. Without loss of generality, suppose user $k$ requests file $k$ in the delivery phase, that is, $d_k =k$.

\begin{figure}
\begin{center}
\resizebox{0.27\textwidth}{!}{%
\begin{tikzpicture}
\node[rectangle,rounded corners,draw, inner sep=0pt,fill=black!32, thick, minimum width=2cm,
                        minimum height = 1cm] (S) at (0,0) {Server};

\node[rectangle,rounded corners,draw, inner sep=0pt,fill=black!4, thick, minimum width=2cm,
                        minimum height = 1cm] (U_1) at (-4,-4) {User $1$};  
                        
\node[rectangle,rounded corners,draw, inner sep=0pt,fill=black!4, thick, minimum width=2cm,
                        minimum height = 1cm] (U_2) at (-1,-4) {User $2$};                                                
 
\node[rectangle,rounded corners,draw, inner sep=0pt,fill=black!4, thick, minimum width=2cm,
                        minimum height = 1cm] (U_N) at (3,-4) {User $K$};   
                        
\node[rectangle,rounded corners,draw, inner sep=0pt,fill=black!14, thick, minimum width=1cm,
                        minimum height = 1cm] (C_1) at (-4,-5.2) {$Z_1$};  
                        
\node[rectangle,rounded corners,draw, inner sep=0pt,fill=black!14, thick, minimum width=1cm,
                        minimum height = 1cm] (C_2) at (-1,-5.2) {$Z_2$};                                                
 
\node[rectangle,rounded corners,draw, inner sep=0pt,fill=black!14, thick, minimum width=1cm,
                        minimum height = 1cm] (C_N) at (3,-5.2) {$Z_K$};  
                        
\node[rectangle,rounded corners,draw, inner sep=0pt,fill=black!14, thick, minimum width=0.6cm,
                        minimum height = 1cm] (F_1) at (-1.2,1.2) {$W_1$};  
\node[rectangle,rounded corners,draw, inner sep=0pt,fill=black!14, thick, minimum width=0.6cm,
                        minimum height = 1cm] (F_2) at (-.6,1.2) {$W_2$};       
\node[rectangle,rounded corners,draw, inner sep=0pt,fill=black!14, thick, minimum width=0.6cm,
                        minimum height = 1cm] (F_3) at (0.0,1.2) {$W_3$};       
\node[rectangle,rounded corners,draw, inner sep=0pt,fill=black!14, thick, minimum width=0.6cm,
                        minimum height = 1cm] (F_3) at (1.4,1.2) {$W_N$};

 \draw[line width = 1.4mm, dash pattern=on .05mm off 2mm,
                                         line cap=round] (0.8,-4) -- (1.3,-4); 
                                         
  \draw[line width = 1.0mm, dash pattern=on .05mm off 2mm,
                                         line cap=round] (0.5,1.2) -- (1.0,1.2);                                                               
\draw [line width=0.4mm, black ] (S.south)--(0,-2);
\draw [line width=0.4mm, black ] (0,-2)--(U_1.north);
\draw [line width=0.4mm, black ] (0,-2)--(U_2.north);      
\draw [line width=0.4mm, black ] (0,-2)--(U_N.north);     
\end{tikzpicture}
}
\end{center}
\caption{Illustration of system model.}
\label{fig:model}
\end{figure}
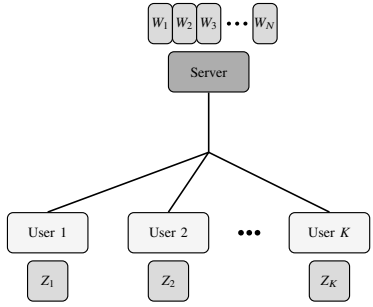
Upon receiving the requests  $\boldsymbol{d} = (d_1,\ldots,d_K) = (1,\ldots,K)$ of the users, the server responds by transmitting a message $\boldsymbol U = (U_1,U_2,\dots)$ that consists of a sequence of packets over the shared link. Each user $k$ then aims to reconstruct its requested file $W_{k}$ using this message and its cache contents.
\subsection{\black{Coded Multicasting Delivery Strategy}}
\label{info_stra}
We focus on the delivery phase in this paper. We assume that the server knows the requests $\boldsymbol{d} = (1,\ldots,K)$, the database of files $W_1,W_2,\ldots,W_N$, and the contents $Z_1,Z_2,\ldots,Z_K$ of the cache of users from the placement phase\footnote{In practice, each file $W_n$ is divided \black{into} a number of parts with unique sub-indices and a user can cache some (zero or more) of these parts. By communicating
only the sub-indices for all files that a user has cached to the server, the server can reconstruct the cache
contents of this user.}.
In order to take advantage of multicasting opportunities among the users and as proposed in the coded caching litertaure (e.g., see \cite{maddah2014fundamental,maddah2015decentralized, niesen2017coded}), 
given the cache content of all the users, the bits of the files can be grouped into subfiles $W_{n,\mathcal{A}}$, where $W_{n,\mathcal{A}}$ is the set of bits of file $W_n$ which is available only in the cache of users belonging to the set $\mathcal{A}$. For example, if $K=3$, then for each $n \in [N]$, file $W_n$ can be split into $W_{n,\varnothing}, W_{n,\{1\}}, W_{n,\{2\}},W_{n,\{3\}}, W_{n,\{1,2\}}, W_{n,\{1,3\}}, W_{n,\{2,3\}},$ $W_{n,\{1,2,3\}}$. 

For each user $k \in [K]$, the server needs to send all non-empty subfiles of file $W_{k}$ which is not available in the cache of user $k$. Let
\begin{align}
\mathcal{W}_k = \{ W_{k, \mathcal{A}}: \mathcal{A} \subseteq [K]\setminus \{k\}, \bc{W_{k, \mathcal{A}}} \neq 0 \}.
\end{align}
Then, $\mathcal{W}_k$ is the set of all subfiles that needs to be sent by the server to user $k$. For example, when $K=3$, we have $\mathcal{W}_1 = \{W_{1,\varnothing}, W_{1,\{2\}}, W_{1,\{3\}}, W_{1,\{2,3\}} \}$ if all these subfiles have non-zero sizes.
We denote the set of all subfiles that needs to be sent by the server as $\mathcal{W} = \cup_{k \in [K]} \mathcal{W}_k$ and we define $\tau_K = |\mathcal{W}|$.

{\color{black}
In this paper, we focus on a specific class of coded multicasting delivery schemes known as the clique\footnote{A clique is a subset of vertices of an undirected graph such that every two distinct vertices in the clique are adjacent. In other words, a clique is a graph where every vertex is adjacent to every other.} cover scheme \cite{ji2014order}. Clique cover delivery schemes have the following property.
\begin{property}
\label{delivery_property}
When a subset $\mathcal{P}$ of $\mathcal{W}$ \black{is} XORed, for every subfile in $\mathcal{P}$, at least one user requesting it can recover this subfile by using its cache contents and only XOR transmission $U=  \oplus_{W \in \mathcal{P}} W$.
\end{property}
For example, if $A \oplus B \oplus C$ was sent, there is a user wanting $A$ that could recover it by using $B$ and $C$ stored in its cache and similar conditions hold for one user wanting $B$ and one user wanting $C$.
We call any $\mathcal{P} \subseteq \mathcal{W}$ that satisfies Property \ref{delivery_property} a \textit{feasible packet}. We interchangeably use the term feasible packet to refer to both $\mathcal{P}$ and $U=  \oplus_{W \in \mathcal{P}} W$ satisfying Property \ref{delivery_property}.
The size or the number of bits of a packet $U =  \oplus_{W \in \mathcal{P}} W$ (or packet $\mathcal{P}$) is denoted by $\bc{U}$ (respectively by $\bc{\mathcal{P}}$) and is given by 
 \begin{align}
 \label{bc_definition}
 \bc{U} =\bc{\mathcal{P}} = \max_{W \in \mathcal{P}} \bc{W}.
 \end{align}
\begin{rem}
\label{rem:side_information}
The reason that a delivery scheme with Property 1 is called a clique cover scheme is as follows: Consider a graph $\mathcal{G}_c$ where the set of vertices is $\mathcal{W}$ and the weight of vertex $W_{k,\mathcal{A}} \in \mathcal{W}$ is $\bc{W_{k,\mathcal{A}}}$. In this graph, there is an edge between $W_{k,\mathcal{A}}$ and $W_{l,\mathcal{B}}$ if and only if subfiles $W_{k,\mathcal{A}}$ and $W_{l,\mathcal{B}}$ are stored in the cache of users $l$ and $k$, respectively. Then, for this graph, known as "side-information graph" in the literature \cite{Birk:2006, index_coding_2011,index_coding_LP, Shanmugam2016,ji2014order}, each clique is a feasible packet and vice-versa. Hereafter, we use the terms feasible packet and clique interchangeably. 
\end{rem}

Note that Property 1 has two straightforward implications: 1) a clique (or feasible packet) $\mathcal{P}$ cannot include more than one subfile requested by each user $k \in [K]$ and hence, it can be written as a set $\{W_{k,\mathcal{A}_k}, k \in \mathcal{M}\}$ for some $\mathcal{M} \subseteq [K]$; 2) for each user $k \in \mathcal{M}$ to be able to recover $W_{k,\mathcal{A}_k}$, it should have $W_{k',\mathcal{A}_{k'}}$ for all $k' \in \mathcal{M} \setminus \{k\}$ stored in its cache, that is, for all $k' \in \mathcal{M} \setminus \{k\}$, $\mathcal{A}_{k'}$ should include $k$ (or equivalently, $k \in \cap_{k' \in  \mathcal{M} \setminus \{k\}} \mathcal{A}_{k'}$). Such a clique $\mathcal{P}$ can be equivalently described as  $\mathcal{P}_{\mathcal{A}_{1:K}}$ with the convention that $\mathcal{A}_k=\aleph$ if $k \notin \mathcal M$.
For example, in the case of  $K=3$, $\mathcal{P}_{\{2,3\},\{1,3\},\aleph} = \{ W_{1,\{2,3\}}, W_{2,\{1,3\}}\}$ and
$\mathcal{P}_{\aleph,\{3\},\aleph} = \{ W_{2,\{3\}}\}$. 
}
\subsection{Problem Formulation}
{\color{black}
The aim of this paper is to find a clique cover delivery scheme that minimizes the total number of bits that the server needs to send to the users such that each user is able to reconstruct the file it has requested. This requires each subfile belonging to the set $\mathcal{W}$ to be sent at least once. Let $\mathscr{P}$ denote the set of all feasible packets that can be generated from the set $\mathcal{W}$ of all subfiles (equivalently, $\mathscr{P}$ is the set of all cliques of the side-information graph $\mathcal{G}_c$) and let $\lambda_K = |\mathscr{P}|$. For any $\mathcal{P} \in \mathscr{P}$, let $\alpha_{\mathcal{P}} \in \{0,1\}$ be a variable indicating whether $\mathcal{P}$ is selected for transmission. We call $\boldsymbol{\alpha} = \vecc(\{\alpha_{\mathcal{P}}: \mathcal{P} \in \mathscr{P}\})$ a clique cover delivery scheme.
The problem of designing the optimal clique cover scheme is formally defined below.

\begin{problem}[Optimal Clique Cover Delivery Problem]
\label{optimal_delivery_N}
For a system of $K$ users, given the set $\mathcal{W}$ of all subfiles that needs to be sent by the server, find a clique cover delivery scheme $\boldsymbol{\alpha}$ that solves
\begin{align*}
\min & \sum_{\mathcal{P} \in \mathscr{P}} \alpha_{\mathcal{P}} \bc{\mathcal{P}}\\
s.t. \quad &\cup_{\mathcal{P} \in \mathscr{P}: \alpha_{\mathcal{P}}=1} \mathcal{P} = \mathcal{W},
\end{align*}
where 
$\mathscr{P}$ is the set of all cliques that can be generated from $\mathcal{W}$.
\end{problem}

\begin{rem}
Note that although in Problem \ref{optimal_delivery_N}, we want to find a clique cover with \black{the} minimum sum of sizes of cliques, this problem is different from "Minimum Weighted Clique Covering Problem" defined in the literature \cite{grotschel2012geometric}.
\end{rem}
}
The optimal \black{clique cover delivery problem} can be represented as an Integer Linear Program (ILP). To this end, recall that $\tau_K = |\mathcal{W}|$ and $\lambda_K = |\mathscr{P}|$. We define $L$ to be a $\{0,1\}$-valued matrix with $\tau_K$ rows and $\lambda_K$ columns where each row corresponds to a subfile that needs to be sent to a user and each column corresponds to one \black{clique}. 
The entry of $L$ corresponding to subfile $W$ and \black{clique} $\mathcal{P}$ is $1$ if $W$ can be decoded from $\mathcal{P}$; otherwise, it is $0$. Using the matrix $L$, the condition that each subfile belonging to the set $\mathcal{W}$ should be sent at least once can be written as $L \boldsymbol{\alpha} \geq \textbf{1}$ where $\textbf{1}$ is an all-one vector of size $\tau_K$. Then, the ILP can be described as follows.

\begin{problem}
\label{ILP_problem}
For a system of $K$ users, given the set $\mathcal{W}$ of all subfiles and the set of $\mathscr{P}$ all \black{cliques} that can be generated from $\mathcal{W}$, find a \black{clique cover delivery scheme} $\boldsymbol{\alpha}$ that solves
\begin{align*}
\min & \sum_{\mathcal{P} \in \mathscr{P}} \alpha_{\mathcal{P}} \bc{\mathcal{P}}\\
s.t. \quad &L \boldsymbol{\alpha} 
 \geq \textbf{1}.
\end{align*}

\end{problem}
We end this section by providing a motivating example that shows the importance of solving \black{the} optimal \black{clique cover delivery problem} (Problem \ref{optimal_delivery_N}). This simple example indicates the significant bandwidth reduction that can be \black{achieved} by solving Problem \ref{optimal_delivery_N} compared to the conventional uncoded delivery, Greedy Coded Multicast (GCM) scheme \cite{maddah2014fundamental,maddah2015decentralized,niesen2017coded,Zhang_diff_sizes_2015}, \black{GCC scheme \cite{ji2014order}, GCLC and HgLC schemes \cite{ji2015efficient}}, and Graph Coloring-based Coded Multicast (GCCM) scheme \cite{hueristics_index_coding}. Note that in the conventional uncoded delivery, \black{the} server sends each subfile of set $\mathcal{W}$ separately. 
\begin{examp}
\label{first_example}
Let $K=3$ and $\mathcal{W} = \{W_{1,\varnothing}, W_{1,\{2\}}, W_{1,\{3\}},$ 
$W_{1,\{2,3\}}, W_{2,\varnothing}, W_{2,\{1\}}, W_{2,\{3\}}, W_{2,\{1,3\}}, W_{3,\varnothing}, W_{3,\{1\}}, W_{3,\{2\}},$ $W_{3,\{1,2\}}\}$ where the size of subfiles $W_{1,\{2,3\}}$ and $W_{2,\{1\}}$ is $300$ bits and the rest of subfiles in $\mathcal{W}$ have the size of $10$ bits.
In this case, \black{the uncoded delivery scheme sends 10 subfiles of size 10 bits and 2 subfiles of size 300 bits, resulting in the total number of bits of $10 \times 10 + 2 \times 300 = 700$}. The GCM, \black{GCC, GCLC, and HgLC schemes
choose}\footnote{\black{Note that while these schemes are generally different, under our assumption that users request different files, they all will simplify to the same algorithm.}} \black{cliques} $\mathcal{P}_{\{2,3\}, \{1,3\}, \{1,2\}}$, $\mathcal{P}_{\{2\},\{1\}, \aleph}$, $\mathcal{P}_{\{3\},\aleph, \{1\}}$, $\mathcal{P}_{\aleph, \{3\},\{2\}}$, $\mathcal{P}_{\varnothing,\aleph, \aleph}$,$\mathcal{P}_{\aleph,\varnothing, \aleph}$,
$\mathcal{P}_{\aleph, \aleph, \varnothing}$ \black{with sizes equal to 300, 300, 10, 10, 10, 10, 10; respectively. Therefore, the total number of bits sent by the GCM,\black{GCC, GCLC, and HgLC schemes} is $2 \times 300+ 5 \times 10 = 650$}. The GCCM scheme\black{\footnote{Note that this algorithm picks vertices arbitrarily and hence, there may be many outputs for this algorithm. We pick one arbitrarily. See \cite{hueristics_index_coding} for more details.}} chooses \black{cliques} $\mathcal{P}_{\aleph, \{3\},\{2\}}$, $\mathcal{P}_{\{2\}, \{1,3\}, \aleph}$, $\mathcal{P}_{\{2,3\}, \aleph,\{1,2\}}$, $\mathcal{P}_{\{3\},\aleph, \{1\}}$, $\mathcal{P}_{\aleph, \{1\},\aleph}$, $\mathcal{P}_{\varnothing,\aleph, \aleph}$,$\mathcal{P}_{\aleph,\varnothing, \aleph}$,
$\mathcal{P}_{\aleph, \aleph, \varnothing}$ \black{with sizes equal to 10, 10, 300, 10, 300, 10, 10, 10; respectively. Therefore, the total number of bits sent by the GCCM scheme is $2 \times 300+ 6 \times 10 = 660$.}
If we solve\footnote{The optimal solution to the ILP is calculated using the GUROBI optimization solver \cite{gurobi}.} Problem
\ref{ILP_problem}, the \black{cliques}
$\mathcal{P}_{\{2\},\{1,3\}, \aleph}$, $\mathcal{P}_{\{3\},\aleph, \{1\}}$, $\mathcal{P}_{\aleph, \{3\},\{2\}}$, $\mathcal{P}_{\aleph, \aleph,\{1,2\}}$, $\mathcal{P}_{\varnothing,\aleph, \aleph}$, $\mathcal{P}_{\aleph,\varnothing, \aleph}$,
$\mathcal{P}_{\aleph, \aleph, \varnothing}$, $\mathcal{P}_{\{2,3\},\{1\}, \aleph}$ are chosen \black{with sizes equal to 300, 10, 10, 10, 10, 10, 10, 10; respectively. Therefore, the total number of bits sent by the optimal solution is $300+ 7 \times 10 = 370$}.
\black{This indicates that \black{the} \black{optimal clique cover delivery strategy} reduces the number of bits required to be sent by 47\% compared to \black{the} uncoded delivery while the GCM (and similarly \black{GCC, GCLC, and HgLC}) and GCCM schemes can decrease this number only by 7.1\% and 5.7\%, respectively.}
\end{examp}
As can be seen from this simple example, the uncoded delivery is inefficient because it does not take advantage of sending multiple subfiles together as a packet. Furthermore, the GCM (and similarly \black{GCC, GCLC, and HgLC}) and GCCM schemes are not able to choose these packets efficiently enough because they do not take the size of subfiles into consideration. 

\section{An Approximation Algorithm for \black{Optimal Clique Cover Delivery Problem}}
\label{sec:approximation}
In Section \ref{sys_model_and_prob_formulation}, we formulated the \black{optimal clique cover delivery problem} as a combinatorial optimization problem and we further showed that it can be represented as the ILP of Problem \ref{ILP_problem}.
However, this ILP suffers from two significant challenges.
\begin{enumerate}[label=(\mylabel{P}{\arabic*})]
\item In order to solve the ILP, one should first find the set $\mathscr{P}$ of all feasible packets that can be generated from the set $\mathcal{W}$ of subfiles. However, finding $\mathscr{P}$ is the same as finding all possible cliques of \black{the side-information graph} (\black{a problem that is} known as "Clique Problem") which is NP-hard \cite{karp1972reducibility}.

\item The number $\tau_K$ of subfiles \black{that needs} to be sent by the server can be as large as $O(2^K)$ and the number $\lambda_K$ of all \black{cliques} can be as large as $O(2^{\tau_K})$. Hence, the number of variables in the ILP can be $O(2^{2^K})$, that is, double exponential in $K$.
\end{enumerate}
Due to the above points, directly solving this ILP is computationally intractable as its complexity is double exponential in \black{the} number $K$ of users. In this section, we propose an approximation algorithm for Problem \ref{optimal_delivery_N}. This algorithm provides the approximation ratio of $(1 + \log K)$ and it has a complexity similar to the coded delivery schemes of \cite{maddah2014fundamental,maddah2015decentralized,niesen2017coded,ji2014order,hueristics_index_coding,Zhang_diff_sizes_2015}. 
To this end, we show that \black{the} \black{optimal clique cover delivery problem} (Problem \ref{optimal_delivery_N}) is equivalent to a weighted set cover problem \cite{Combinatorial_Optimization_Korte}. This equivalence can be easily observed by considering $\mathcal{W}$ as a set of elements that needs to be covered in the weighted set cover problem, $\mathscr{P}$ as a set of subsets of the elements' set in the weighted set cover problem, and for each $\mathcal{P} \in \mathscr{P}$, $\bc{\mathcal{P}}$ as the weight of subset $\mathcal{P}$ of elements in the weighted set cover problem.
%
%
%
%
Now using this equivalence of problems, we propose Algorithm \ref{first_delivery} for solving Problem \ref{optimal_delivery_N}. This algorithm is based on an approximation algorithm for the weighted set cover problem \cite{chvatal1979greedy} which has been modified according \black{to} the following property of the set $\mathscr{P}$ of all feasible packets in Problem \ref{optimal_delivery_N}: \black{If $\mathcal{P}_1$ and $\mathcal{P}_2$ are feasible packets, then $\mathcal{P}_1 \setminus \mathcal{P}_2$ is either an empty packet or a feasible packet. The correctness of this property \black{results} from the fact that all feasible packets are cliques of the side-information graph and hence, subtracting the vertices of one clique from another one results in either another clique or an empty set.}

\black{Before presenting the results about the approximation ratio that Algorithm \ref{first_delivery} provides for Problem \ref{optimal_delivery_N}, we first describe this algorithm in simple words: Let $\mathcal{E}$ be any subset of \black{the} set $\mathcal{W}$ of all subfiles that the server needs to send. Furthermore, let $\mathscr{S}$ be the set of all cliques that can be generated from $\mathcal{E}$. Then, in each iteration of Algorithm \ref{first_delivery}, the \textsc{Packet-Based-Optimizer (PBO)} function determines a clique that maximizes $\frac{|\mathcal{P}|}{\bc{\mathcal{P}}}$ which is a metric that measures the number of subfiles included in clique $\mathcal{P}$ per bit unit. Note that this metric captures the fact that we would like to send as many subfiles as possible with the minimum number of bits.
Let $\mathcal{P^{\black{*}}}$ denote this clique. Then, we add this clique to set $\mathscr{C}$ of all cliques that the server needs to send. Since it is redundant to send a subfile more than once, we remove the subfiles available in \black{the} clique $\mathcal{P^{\black{*}}}$ from the set $\mathcal{E}$. At the end, we update the set $\mathscr{S}$ by eliminating all cliques $\mathcal{S}$ that intersect with $\mathcal{P^{\black{*}}}$.}

\begin{algorithm}
\caption{}
\label{first_delivery}
\hspace*{\algorithmicindent} {\small \textbf{Input:} Set $\mathcal{W}$ of subfiles and set $\mathscr{P}$ of all cliques.}
\begin{algorithmic}[1]
\State $\mathscr{C} = \emptyset$
\State $\mathcal{E} = \mathcal{W}$
\State $\mathscr{S} = \mathscr{P}$
\While{$\mathcal{E} \neq \emptyset$}
\State $\mathcal{P^{\black{*}}} =$ \textsc{PBO}$(\mathscr{S})$
\State $\mathscr{C} = \mathscr{C} \cup \{\mathcal{P^{\black{*}}}\}$
\State $\mathcal{E} = \mathcal{E} \setminus \mathcal{P^{\black{*}}}$
\For{$\mathcal{S} \in \mathscr{S}$}
\If{$\mathcal{S} \cap \mathcal{P^{\black{*}}} \neq \emptyset$}
\State $\mathscr{S} = \mathscr{S} \setminus \{ \mathcal{S}\}$
\EndIf
\EndFor
\EndWhile
\end{algorithmic}
\hspace*{\algorithmicindent} {\small \textbf{Output:} Set of cliques $\mathscr{C}$.}
\end{algorithm} 
\begin{algorithm}
\begin{algorithmic}
\Function{PBO}{$\mathscr{S}$}
  \State \Return $\argmax_{\mathcal{P} \in \mathscr{S}} \dfrac{|\mathcal{P}|}{\bc{\mathcal{P}}}$
\EndFunction
\end{algorithmic}
\end{algorithm} 
\begin{lem}
\label{optimal_delivery_solution_first}
If the set $\mathscr{P}$ of all \black{cliques} is known, Algorithm \ref{first_delivery} achieves a $(1 + \log K)$-approximation to Problem~\ref{optimal_delivery_N}.
\end{lem}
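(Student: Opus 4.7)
The plan is to reduce Problem~\ref{optimal_delivery_N} to a weighted set cover instance, show that Algorithm~\ref{first_delivery} implements the classical greedy algorithm for that instance, and then invoke the well-known $H(d_{\max})$-approximation bound, specialized via a structural upper bound on the size of any clique. First I would set up the reduction: take $\mathcal{W}$ as the ground set, the cliques in $\mathscr{P}$ as the covering sets, and $\bc{\mathcal{P}}$ as the weight of the set $\mathcal{P}$. Under this identification, a feasible clique cover delivery scheme is exactly a feasible set cover of identical cost, so the two optimization problems are equivalent.

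Next I would show that the \textsc{PBO} rule matches the classical greedy choice for weighted set cover, which at each step picks $\mathcal{P} \in \mathscr{P}$ minimizing $\bc{\mathcal{P}}/|\mathcal{P}\cap\mathcal{E}|$ over uncovered elements $\mathcal{E}$. Using the clique-minus-clique closure property highlighted in the excerpt (if $\mathcal{P}_1,\mathcal{P}_2$ are feasible packets, then $\mathcal{P}_1\setminus\mathcal{P}_2$ is empty or a feasible packet), the restriction $\mathcal{P}\cap\mathcal{E}$ of any clique is again a clique, and has weight $\bc{\mathcal{P}\cap\mathcal{E}}\le\bc{\mathcal{P}}$. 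A short induction on iterations shows that every such restriction remains in $\mathscr{S}$, so
\[
\min_{\mathcal{P}\in\mathscr{S}}\frac{\bc{\mathcal{P}}}{|\mathcal{P}|}
\;=\;\min_{\mathcal{P}\in\mathscr{P}:\,\mathcal{P}\cap\mathcal{E}\ne\varnothing}\frac{\bc{\mathcal{P}}}{|\mathcal{P}\cap\mathcal{E}|}.
\]
Hence the marginal cost per newly covered subfile incurred by Algorithm~\ref{first_delivery} at every iteration matches that of the standard weighted-set-cover greedy, and the total weight of the cover it outputs is no larger.

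After that, I would invoke Chv\'atal's classical guarantee: the greedy algorithm for weighted set cover produces a cover of total weight at most $H(d_{\max})\cdot\mathrm{OPT}$, where $d_{\max}$ is the maximum cardinality of any covering set and $H(m)=\sum_{i=1}^{m}\tfrac{1}{i}$. The final structural input is that Property~\ref{delivery_property} forbids two subfiles requested by the same user from coexisting in a feasible packet, so every clique $\mathcal{P}\in\mathscr{P}$ contains at most one subfile per user and therefore $|\mathcal{P}|\le K$. This yields $d_{\max}\le K$, and using $H(K)\le 1+\log K$ gives the claimed $(1+\log K)$-approximation.

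The main subtlety I expect is the second step: one must verify that deleting \emph{entire} cliques from $\mathscr{S}$ whenever they intersect the chosen $\mathcal{P}^{*}$, rather than replacing them with their truncations to the uncovered subfiles, does not lose any useful options. The clique-minus-clique closure property is precisely what resolves this, since for any discarded clique its truncation to the uncovered part is itself a clique and was already present in $\mathscr{S}$ from the initialization $\mathscr{S}=\mathscr{P}$; moreover this truncation is disjoint from every previously chosen packet and so survives all subsequent removal steps. Once this is clean, the rest of the argument is a direct specialization of Chv\'atal's analysis with $d_{\max}=K$.
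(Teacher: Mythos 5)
Your proposal is correct and follows essentially the same route as the paper: reduce to weighted set cover, use the closure of cliques under set difference to show that restricting to the surviving (untruncated) cliques in $\mathscr{S}$ loses nothing relative to Chv\'atal's greedy, and then bound the harmonic factor via $|\mathcal{P}|\le K$. The paper carries this out as an explicit iteration-by-iteration equivalence with a restated version of Chv\'atal's algorithm (its Claims 1--4), whereas you compress the same content into the single ratio identity, but the underlying argument is the same.
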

\begin{proof}
See Appendix \ref{proof:optimal_delivery_solution_first}.
\end{proof}

Note that although Lemma \ref{optimal_delivery_solution_first} suggests an algorithm with a good approximation ratio for Problem \ref{optimal_delivery_N}, it still suffers from issue (P1). \black{Furthermore, in each iteration of Algorithm \ref{first_delivery}, in order to find the output of function PBO, one should go over the set 
$\mathscr{S}$ of cliques and find the clique maximizing $\frac{|\mathcal{P}|}{\bc{\mathcal{P}}}$. The number of cliques in set $\mathscr{S}$ can be generally $O(2^{2^K})$, that is, double exponential in the number $K$ of users.
To sidestep these difficulties, in the next section, we propose a new algorithm that enjoys the good approximation ratio of Algorithm \ref{first_delivery}, but it does not work with the set $\mathscr{S}$ of cliques.}
\subsection{Size-Aware Coded Multicast}
In this section, we first propose \textsc{Subfile-Based-Optimizer (SBO)}, an alternative way of calculating function PBO which only needs the set $\mathcal{W}$ of subfiles. The following theorem states this result.
\begin{thm}
\label{equivalent_function}
For any subset $\mathcal{E} \subseteq \mathcal{W}$ of subfiles, let $\mathscr{S}$ denote the set of all \black{cliques} that can be generated from $\mathcal{E}$. Then, the size of the \black{cliques} obtained by functions PBO and SBO \black{is} equal, that is, 
$\bc{\textsc{PBO}(\mathscr{S})} = \bc{\textsc{SBO}(\mathcal{E})}$.
\end{thm}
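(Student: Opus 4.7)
The plan is to establish that, even though \textsc{PBO} takes the set of cliques $\mathscr{S}$ as input while \textsc{SBO} works directly with the subfile set $\mathcal{E}$, both routines end up selecting cliques with identical values of $\bc{\cdot}$. The guiding idea is to partition the search space $\mathscr{S}$ according to which subfile $W \in \mathcal{E}$ plays the role of the ``heaviest'' element of the clique, and then to show that for each such partition class the best ratio can be computed purely from $\mathcal{E}$.

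First I would exploit the definition $\bc{\mathcal{P}} = \max_{W \in \mathcal{P}} \bc{W}$ from \eqref{bc_definition} to observe that every clique $\mathcal{P} \in \mathscr{S}$ has a heaviest subfile $W^* \in \mathcal{P}$ with $\bc{W^*} = \bc{\mathcal{P}}$. Partitioning $\mathscr{S}$ by this choice of $W^*$, the output of \textsc{PBO} can be rewritten as the maximum over $W \in \mathcal{E}$ of the best ratio among cliques whose heaviest element is $W$. For a fixed $W$ the denominator of $|\mathcal{P}|/\bc{\mathcal{P}}$ is frozen at $\bc{W}$, so maximizing the ratio reduces to the combinatorial problem of maximizing $|\mathcal{P}|$ over feasible cliques that contain $W$ and use only subfiles of size at most $\bc{W}$.

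Next I would show that, using the structural description from Remark~\ref{rem:side_information}, this inner maximization admits an explicit solution in terms of $\mathcal{E}$. For $W = W_{k,\mathcal{A}}$, another subfile $W_{k',\mathcal{A}'}$ can be adjoined to $W$ in a clique only if $k \in \mathcal{A}'$ and $k' \in \mathcal{A}$, with extensions to larger cliques dictated by the transitive containment conditions $k'' \in \bigcap_{k'\in \mathcal{M}\setminus \{k''\}} \mathcal{A}_{k'}$. I would argue that \textsc{SBO} constructs, for each candidate $W \in \mathcal{E}$, this maximum-cardinality feasible clique entirely from the subfile labels in $\mathcal{E}$ and their size information, and then returns the $W$ attaining the best ratio. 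By the partition argument above, the resulting clique achieves exactly the same optimal ratio, and because its heaviest element lies in $\mathcal{E}$, the same value of $\bc{\cdot}$ as \textsc{PBO}'s output. If ties in the ratio admit cliques with different heaviest elements, a consistent tie-breaking rule built into both procedures (e.g., preferring the smallest $\bc{\cdot}$ among ratio-optimal cliques) would guarantee strict equality of the two outputs.

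The main obstacle I foresee is the explicit construction step: arguing that the constrained maximum-clique subproblem for each fixed heaviest element $W$ can in fact be solved directly from $\mathcal{E}$, despite maximum clique being NP-hard for general graphs. The saving grace is that the side-information graph of coded caching has the very special combinatorial structure of Remark~\ref{rem:side_information}, where cliques correspond to index tuples $\mathcal{A}_{1:K}$ satisfying nested containment conditions tied to the cache indices. Verifying that this structure yields a clean constructive identification of the optimal clique in terms of unions and intersections of the $\mathcal{A}$-sets, rather than an enumeration over $\mathscr{S}$, is where the bulk of the technical work will lie, and it is what ultimately lets \textsc{SBO} bypass issue (P1) while still matching the output size of \textsc{PBO}.
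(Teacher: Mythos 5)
Your proposed decomposition is the transpose of the one that actually makes this theorem work, and the transposition creates a gap you have not closed. You partition $\mathscr{S}$ by the heaviest subfile $W^*$ of each clique, which freezes the \emph{denominator} $\bc{\mathcal{P}}=\bc{W^*}$ and leaves you with the inner problem of maximizing $|\mathcal{P}|$ over cliques containing $W$ built from subfiles of size at most $\bc{W}$. That inner problem is a genuine maximum-cardinality clique search: you must find the largest user set $\mathcal{M}$ such that for every $k\in\mathcal{M}$ some $W_{k,\mathcal{A}_k}\in\mathcal{E}$ exists with $\mathcal{M}\setminus\{k\}\subseteq\mathcal{A}_k$ (and the size bound). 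This does not reduce to ``unions and intersections of the $\mathcal{A}$-sets''; it still requires searching over subsets of users, and you explicitly defer this step (``where the bulk of the technical work will lie'') rather than carrying it out. Moreover, your description of \textsc{SBO} is not what the algorithm does: \textsc{SBO} does not iterate over candidate heaviest subfiles $W\in\mathcal{E}$; it iterates over user subsets $\mathcal{T}\subseteq\mathcal{M}$ and, for each $j\in\mathcal{T}$, independently picks the smallest subfile in $\mathcal{L}_{j,\mathcal{T}}=\{W_{j,\mathcal{A}}\in\mathcal{E}:\mathcal{T}\setminus\{j\}\subseteq\mathcal{A}\}$. So the correspondence you assert between your partition classes and \textsc{SBO}'s computation is not established.

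The paper's proof partitions $\mathscr{S}$ the other way: by the set $\mathcal{T}$ of users represented in the clique, into classes $\mathscr{Q}_{\mathcal{T}}$. Within each class the \emph{numerator} is frozen, $|\mathcal{P}|=|\mathcal{T}|$, so maximizing the ratio reduces to minimizing $\bc{\mathcal{P}}=\max_{j\in\mathcal{T}}\bc{W_{j,\mathcal{A}_j}}$; and because cliques in $\mathscr{Q}_{\mathcal{T}}$ are exactly the sets obtained by choosing one element of $\mathcal{L}_{j,\mathcal{T}}$ for each $j\in\mathcal{T}$ (a product structure), minimizing a max of independent coordinates decomposes into per-user minimizations --- precisely what \textsc{SBO} computes. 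This coordinate-wise decomposition is the step that has no analogue in your partition: fixing the heaviest element does not make the remaining choices independent across users. If you want to rescue your route you would end up re-introducing an enumeration over user subsets inside each class, at which point you have reconstructed the paper's argument. (Your observation about ties is fair --- the equality is really of the optimal ratio values achieved, not of arbitrary argmax representatives --- but that is a presentational issue shared with the paper, not a fix for the missing inner step.)
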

\begin{proof}
See Appendix \ref{proof:optimal_delivery_solution_second}.
\end{proof}
\begin{algorithm}[h]
\begin{algorithmic}
\Function{SBO}{$\mathcal{E}$} 
\State Let $\mathcal{M} = \{k: k \in [K], \exists W_{k, \mathcal{A}} \in \mathcal{E} \text{ for some } \mathcal{A} \subseteq [K] \setminus \{k\} \}$
\black{\State Let $\mathscr{T} = \{\mathcal{T}: \mathcal{T} \subseteq \mathcal{M}, |\mathcal{T}|>0 \}$}
\For{\black{$\mathcal{T} \in \mathscr{T}$}}
	\For{$j \in \mathcal{T}$}
		\State Let $\mathcal{L}_{j,\mathcal{T}} = \{W_{j, \mathcal{A}}: W_{j, \mathcal{A}} \in \mathcal{E}, \mathcal{T} \setminus \{j\}  \subseteq \mathcal{A} \}$\footnotemark
		\State Calculate $V_{j,\mathcal{T}} = \argmin_{W \in \mathcal{L}_{j,\mathcal{T}}} \bc{W}$
	\EndFor
	\State Let $\mathcal{R}_{\mathcal{T}} = \{V_{k,\mathcal{T}} : k \in \mathcal{T} \}$
\EndFor
\State  \Return $\argmax_{\mathcal{R}_{\mathcal{T}} : \black{\mathcal{T} \in \mathscr{T}}} \dfrac{|\mathcal{T}|}{\bc{\mathcal{R}_{\mathcal{T}} }}$\footnotemark
\EndFunction
\end{algorithmic}
\end{algorithm} 
\addtocounter{footnote}{-2} 
 \stepcounter{footnote}\footnotetext{In case that there are more than one subfile $W \in \mathcal{L}_{j,\mathcal{T}}$ minimizing $\bc{W}$, we can pick one arbitrarily. Further, define $\mathcal{L}_{j,\mathcal{T}} = \{W_*\}$ with $\bc{W_*} = \infty$ whenever $\mathcal{L}_{j,\mathcal{T}}$ is empty.}
 \stepcounter{footnote}\footnotetext{In case that there are more than one packet $\mathcal{R}_{\mathcal{T}}$ maximizing $\frac{|\mathcal{T}|}{\bc{\mathcal{R}_{\mathcal{T}}}}$, we can pick one arbitrarily.}
{\color{black}

\begin{examp}
\label{example:SBO_PBO}
Consider Example \ref{first_example} with $K=3$ and $\mathcal{W} = \{W_{1,\varnothing}, W_{1,\{2\}}, W_{1,\{3\}}, W_{1,\{2,3\}}, W_{2,\varnothing}, W_{2,\{1\}}, W_{2,\{3\}}, W_{2,\{1,3\}}, W_{3,\varnothing},$ $W_{3,\{1\}},$ $W_{3,\{2\}}, W_{3,\{1,2\}}\}$ where the size of subfiles $W_{1,\{2,3\}}$ and $W_{2,\{1\}}$ is $300$ bits and the rest of subfiles in $\mathcal{W}$ have the size of $10$ bits. The side-information graph $\mathcal{G}_c$ of this example is shown in Fig. \ref{grpah_ex1}. Now, let $\mathcal{E} = \mathcal{W}$ and $\mathscr{S}$ be the set of all feasible packets that can be generated from $\mathcal{W}$ which is the set of all cliques of graph $\mathcal{G}_c$ in Fig. \ref{grpah_ex1}.
\begin{small}
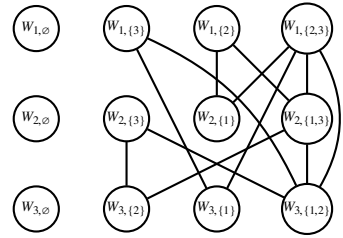
\begin{figure}
\begin{center}
\begin{tikzpicture}[minimum size=10mm, node distance=1cm, every loop/.style={},
                    thick,state/.style={circle, inner sep=0pt, draw,font=\sffamily\small\bfseries}]
\begin{scope} [rotate=-90, scale=0.6, every node/.append style={scale=0.6}]
\node[state] (A_1) at (-4,0) {$W_{1,\{2,3\}}$};
\node[state] (B_1) at (-4,-2) {$W_{1,\{2\}}$};
\node[state] (C_1) at (-4,-4) {$W_{1,\{3\}}$};
\node[state] (D_1) at (-4,-6) {$W_{1,\varnothing}$};

\node[state] (A_2) at (-2,0) {$W_{2,\{1,3\}}$};
\node[state] (B_2) at (-2,-2) {$W_{2,\{1\}}$};
\node[state] (C_2) at (-2,-4) {$W_{2,\{3\}}$};
\node[state] (D_2) at (-2,-6) {$W_{2,\varnothing}$};

\node[state] (A_3) at (0,0) {$W_{3,\{1,2\}}$};
\node[state] (B_3) at (0,-2) {$W_{3,\{1\}}$};
\node[state] (C_3) at (0,-4) {$W_{3,\{2\}}$};
\node[state] (D_3) at (0,-6) {$W_{3,\varnothing}$};

\draw[every node/.style={font=\sffamily\small}]
(A_1)  edge node {}   (A_2)
(A_1)  edge node {}   (B_2)
(A_1)  edge node {}   (B_3)

(A_2)  edge node {}   (A_3)
(A_2)  edge node {}   (B_1)
(A_2)  edge node {}   (C_3)

 (A_1) edge [bend left] node {}   (A_3)
(A_3)  edge [bend right =20] node {}   (C_1)
(A_3)  edge  node {}   (C_2)

(B_1)  edge node {}   (B_2)
(C_1)  edge node {}   (B_3)
(C_2)  edge node {}   (C_3);
\end{scope}
\end{tikzpicture}
\end{center}
\caption{\black{Side-information graph $\mathcal{G}_c$ of Example \ref{example:SBO_PBO}.}}
\label{grpah_ex1}
\end{figure}
\end{small}
\begin{itemize}
\item How to calculate $\textsc{PBO}(\mathscr{S})$? To this end, one should go over the set $\mathscr{S}$ of all cliques of graph $\mathcal{G}_c$ and calculates $\frac{|\mathcal{P}|}{\bc{\mathcal{P}}}$ for each clique $\mathcal{P} \in \mathscr{S}$ to find the clique maximizing this ratio. The set $\mathscr{S}$ is given as follows,

\begin{small}
\begin{align}
&\hspace{-7pt} \mathscr{S}= \{ \mathcal{P}_{\{2,3\}, \{1,3\}, \{1,2\}},
\mathcal{P}_{\{2,3\},\{1,3\},\aleph},
\mathcal{P}_{\aleph,\{1,3\},\{1,2\}},
\mathcal{P}_{\{2,3\},\aleph,\{1,2\}},
\notag \\
&\hspace{-1pt} \mathcal{P}_{\{2,3\},\{1\},\aleph},
\mathcal{P}_{\{2,3\},\aleph, \{1\}},
\mathcal{P}_{\{2\},\{1,3\},\aleph}, 
\mathcal{P}_{\aleph,\{1,3\},\{2\}}, 
\mathcal{P}_{\{3\}, \aleph, \{1,2\}},
\notag \\
&\hspace{-1pt}
\mathcal{P}_{\aleph,\{1,3\},\{2\}}, 
\mathcal{P}_{\{3\}, \aleph, \{1,2\}},
\mathcal{P}_{\aleph, \{3\}, \{1,2\}},
\mathcal{P}_{\{2\},\{1\}, \aleph},
\mathcal{P}_{\{3\},\aleph, \{1\}},
\notag \\
&\hspace{-1pt}
\mathcal{P}_{\aleph, \{3\},\{2\}},
\mathcal{P}_{\{2,3\}, \aleph, \aleph},
\mathcal{P}_{\aleph, \{1,3\}, \aleph},
\mathcal{P}_{\aleph, \aleph, \{1,2\}},
\mathcal{P}_{\{2\},\aleph, \aleph},
\mathcal{P}_{\{3\},\aleph, \aleph},
\notag \\
&
\mathcal{P}_{\aleph, \{1\}, \aleph},
\mathcal{P}_{\aleph, \{3\}, \aleph}, 
\mathcal{P}_{\aleph, \aleph, \{1\}},
\mathcal{P}_{\aleph, \aleph, \{2\}},
\mathcal{P}_{\varnothing,\aleph, \aleph},
\mathcal{P}_{\aleph,\varnothing, \aleph},
\mathcal{P}_{\aleph, \aleph, \varnothing} \}.
\end{align}
\end{small}

\item Why $\bc{\textsc{PBO}(\mathscr{S})} = \bc{\textsc{SBO}(\mathcal{E})}$? First note that the set $\mathscr{S}$ of all cliques of graph $\mathcal{G}_c$ can be decomposed into following 7 disjoint sets $\mathscr{Q}_{\mathcal{T}}$ for 
$\mathcal{T} \in \mathscr{T}:= \{\{1\},\{2\}, \{3\},\{1,2\},\{1,3\},$ $\{2,3\}, \{1,2,3\}\}$, that is $\mathscr{S} = \cup_{\mathcal{T} \in \mathscr{T}} \mathscr{Q}_{\mathcal{T}}$, where $\mathscr{Q}_{\mathcal{T}}$ is the set of all cliques $\mathcal{P}$ including exactly one subfile for each user in set $\mathcal{T}$ and no subfile for users not in set $\mathcal{T}$ (See Appendix \ref{proof:optimal_delivery_solution_second} for a proof of disjointness of sets $\mathscr{Q}_{\mathcal{T}}$'s).

\begin{small}
\begin{align}
& \mathscr{Q}_{\{1\}} = \{\mathcal{P}_{\varnothing,\aleph, \aleph}, \mathcal{P}_{\{2\},\aleph, \aleph},
\mathcal{P}_{\{3\},\aleph, \aleph}, \mathcal{P}_{\{2,3\}, \aleph, \aleph}\}, \notag \\
& \mathscr{Q}_{\{2\}} = \{ \mathcal{P}_{\aleph,\varnothing, \aleph}, \mathcal{P}_{\aleph, \{1\}, \aleph},
\mathcal{P}_{\aleph, \{3\}, \aleph}, \mathcal{P}_{\aleph, \{1,3\}, \aleph} \}, \notag \\
&\mathscr{Q}_{\{3\}} = \{\mathcal{P}_{\aleph, \aleph, \varnothing}, \mathcal{P}_{\aleph, \aleph, \{1\}},
\mathcal{P}_{\aleph, \aleph, \{2\}}, \mathcal{P}_{\aleph, \aleph, \{1,2\}}\}, \notag \\
& \mathscr{Q}_{\{1,2,3\}} = \{  \mathcal{P}_{\{2,3\}, \{1,3\}, \{1,2\}} \}, \notag \\
&\mathscr{Q}_{\{1,2\}}= \{\mathcal{P}_{\{2,3\},\{1,3\},\aleph}, \mathcal{P}_{\{2,3\},\{1\},\aleph}, \mathcal{P}_{\{2\},\{1,3\},\aleph}, \mathcal{P}_{\{2\},\{1\}, \aleph}\}, \notag \\
&\mathscr{Q}_{\{1,3\}} = \{\mathcal{P}_{\{2,3\}, \aleph,\{1,2\}}, \mathcal{P}_{\{2,3\},\aleph, \{1\}}, \mathcal{P}_{\{3\}, \aleph, \{1,2\}},  \mathcal{P}_{\{3\},\aleph, \{1\}} \}, \notag \\
&\mathscr{Q}_{\{2,3\}} = \{\mathcal{P}_{\aleph,\{1,3\},\{1,2\}}, \mathcal{P}_{\aleph,\{1,3\},\{2\}}, \mathcal{P}_{\aleph, \{3\}, \{1,2\}},  \mathcal{P}_{\aleph, \{3\},\{2\}}  \}.
\end{align}
\end{small}

Now, the sets $\mathscr{Q}_{\mathcal{T}}$'s have this property that for any $\mathcal{P} \in \mathscr{Q}_{\mathcal{T}}$, we have $|\mathcal{P}| = |\mathcal{T}|$. Remember that in the function \textsc{PBO}, we want to find a feasible packet $\mathcal{P} \in \mathscr{S}$ \black{that maximizes} $\frac{|\mathcal{P}|}{\bc{\mathcal{P}}}$. Using the decomposition of space $\mathscr{S}$ described above, instead of searching over space $\mathscr{S}$, we can \black{search} over the 7 subsets $\mathscr{Q}_{\mathcal{T}}$ separately. In other words, 
\begin{align}
\bc{\textsc{PBO}(\mathscr{S})} = \max_{\mathcal{P} \in \mathscr{S}} \frac{|\mathcal{P}|}{\bc{\mathcal{P}}} = \max_{\mathcal{T}}\max_{\mathcal{\mathcal{P}} \in \mathscr{Q}_{\mathcal{T}}}\frac{|\mathcal{P}|}{\bc{\mathcal{P}}}.
\label{eq:PBO_to_SBO_1}
\end{align}
Since $|\mathcal{P}| = |\mathcal{T}|$ for all $\mathcal{P} \in \mathscr{Q}_{\mathcal{T}}$, maximizing $\frac{|\mathcal{P}|}{\bc{\mathcal{P}}}$ over $\mathcal{P} \in\mathscr{Q}_{\mathcal{T}}$  is equivalent to minimizing $\bc{\mathcal{P}}$ over $\mathcal{P} \in \mathscr{Q}_{\mathcal{T}}$.

Now, \black{let us} focus on minimizing $\bc{\mathcal{P}}$ over $\mathcal{P} \in \mathscr{Q}_{\mathcal{T}}$ for a specific set $\mathcal{T}$. 
Taking for example $\mathcal{T} = \{1,2\}$, we can define the sets $\mathcal{L}_{1,\{1,2\}}$ and $\mathcal{L}_{2,\{1,2\}}$ of subfiles as in Fig. \ref{fig:T_1_2}, that is, $\mathcal{L}_{1,\{1,2\}}:= \{W_{1,\{2\}}, W_{1,\{2,3\}}\}$ and $\mathcal{L}_{2,\{1,2\}}:= \{W_{2,\{1\}}, W_{2,\{1,3\}}\}$. Then any feasible packet $\mathcal{P} \in \mathscr{Q}_{\{1,2\}}$, being a clique in Fig. \ref{fig:T_1_2}, has exactly one subfile from $\mathcal{L}_{1,\{1,2\}}$ and one subfile from $\mathcal{L}_{2,\{1,2\}}$. 
Hence, minimizing $\bc{\mathcal{P}}$ over all $\mathcal{P} \in \mathscr{Q}_{\{1,2\}}$ is the same as finding the subfile\black{s} with \black{the minimum size} in both $\mathcal{L}_{1,\{1,2\}}$ and $\mathcal{L}_{2,\{1,2\}}$.
We use $V_{1,\{1,2\}}$ and $V_{2,\{1,2\}}$ to denote the two subfiles with the minimum sizes in $\mathcal{L}_{1,\{1,2\}}$ and $\mathcal{L}_{2,\{1,2\}}$, respectively.
As it has been shown in Fig. \ref{fig:T_1_2} with dark color, we have $V_{1,\{1,2\}} = W_{1,\{2\}}$ and $V_{2,\{1,2\}} = W_{2,\{1,3\}}$. Then $\mathcal{R}_{\{1,2\}} = \{V_{1,\{1,2\}},V_{2,\{1,2\}}\} = \{W_{1,\{2\}}, W_{2,\{1,3\}}\}$ is a clique belonging to $\mathscr{Q}_{\{1,2\}}$ and it has the minimum size. 
Similarly, for all $\mathcal{T} \in \mathscr{T}$, we can find $\mathcal{R}_{\mathcal{T}}$ such that  $\bc{\mathcal{R}_{\mathcal{T}}} = \min_{\mathcal{P} \in \mathscr{Q}_{\mathcal{T}}}\bc{\mathcal{P}}$, and consequently, $\max_{\mathcal{P} \in \mathscr{Q}_{\mathcal{T}}}\frac{|\mathcal{P}|}{\bc{\mathcal{P}}} = \frac{|\mathcal{T}|}{\bc{\mathcal{R}_{\mathcal{T}}}}$. This together with \eqref{eq:PBO_to_SBO_1} \black{implies} that
\begin{align}
\bc{\textsc{PBO}(\mathscr{S})} = \max_{\mathcal{T} \in \mathscr{T}}\max_{\mathcal{\mathcal{P}} \in \mathscr{Q}_{\mathcal{T}}}\frac{|\mathcal{P}|}{\bc{\mathcal{P}}}
= \max_{\mathcal{T} \in \mathscr{T}}  \frac{|\mathcal{T}|}{\bc{\mathcal{R}_{\mathcal{T}}}}.
\label{eq:PBO_to_SBO_2}
\end{align}
Now, \eqref{eq:PBO_to_SBO_2} provides another way of calculating $\bc{\textsc{PBO}(\mathscr{S})}$ that does not involve the set $\mathscr{S}$ of cliques\black{;} instead all it needs is the collection of sets $\mathcal{R}_{\mathcal{T}}$ for $\mathcal{T} \in \mathscr{T}$. In order to find $\mathcal{R}_{\mathcal{T}}$ for any $\mathcal{T} \in \mathscr{T}$, all we need is to find the sets $\mathcal{L}_{j,\mathcal{T}}$ for $j \in \mathcal{T}$, which can be done only by using the set $\mathcal{E}$ (See definition of $\mathcal{L}_{j,\mathcal{T}}$ in \black{the} function $\textsc{SBO}(\mathcal{E})$). Since $\max_{\mathcal{T} \in \mathscr{T}}  \frac{|\mathcal{T}|}{\bc{\mathcal{R}_{\mathcal{T}}}}$ is the size of \black{the} output clique of \black{the} function $\textsc{SBO}(\mathcal{E})$, from \eqref{eq:PBO_to_SBO_2} we have, $\bc{\textsc{PBO}(\mathscr{S})} = \bc{\textsc{SBO}(\mathcal{E})}$.

\item How to calculate $\textsc{SBO}(\mathcal{E})$? To this end, we first find sets $\mathcal{M}$ and $\mathscr{T}$ as described in \black{the} function $\textsc{SBO}(\mathcal{E})$. Note that $\mathcal{M}$ is the set of users' indices for which there exists a subfile in $\mathcal{E}$ and $\mathscr{T}$ is the set of all non-empty subsets of $\mathcal{M}$. Then, for each $\mathcal{T} \in \mathscr{T}$, we find sets $\mathcal{L}_{j,\mathcal{T}}$ for all $j \in \mathcal{T}$. Sets $\mathcal{L}_{j,\mathcal{T}}$ are denoted by blue rectangles in Fig. \ref{fig:proof}. For each set $\mathcal{L}_{j,\mathcal{T}}$, $V_{j,\mathcal{T}}$, which is the subfile with the smallest size, is denoted in dark color. Furthermore, for each $\mathcal{T} \in \mathscr{T}$, set $\mathcal{R}_{\mathcal{T}}$ which includes the subfiles $V_{j,\mathcal{T}}$ for all $j \in \mathcal{T}$ is denoted by a red ellipse. Then, the output of function $\textsc{SBO}(\mathcal{E})$ is the clique $\mathcal{R}_{\mathcal{T}}$ maximizing $\frac{|\mathcal{T}|}{\bc{\mathcal{R}_{\mathcal{T}} }}$.
\end{itemize}

\end{examp}

}

\begin{small}
\begin{figure}
\begin{subfigure}[b]{0.23\textwidth}
  \begin{center}
\begin{tikzpicture}[minimum size=10mm, node distance=1cm, every loop/.style={},
                    thick,state/.style={circle, inner sep=0pt, draw,font=\sffamily\small\bfseries}]
\begin{scope} [rotate=-90, scale=0.5, every node/.append style={scale=0.5}]
\node[state, fill=black!14] (A_1) at (-4,0) {$W_{1,\{2,3\}}$};
\draw[blue,thick, line width=0.4mm,dotted,label=right:$m_1$]     ($(A_1.north)+(-0.2,0.65)$) rectangle ($(A_1.south)+(0.2,-0.65)$);
\node[left=2mm of A_1] (name) {\bblue{\Large $\mathcal{L}_{1,\{1,2, 3\}}$}};

\node[state,fill=black!14] (A_2) at (-2,0) {$W_{2,\{1,3\}}$};
\draw[blue,thick, line width=0.4mm,dotted,label=right:$m_1$]     ($(A_2.north)+(-0.2,0.65)$) rectangle ($(A_2.south)+(0.2,-0.65)$);
\node[left=2mm of A_2] (name) {\bblue{\Large $\mathcal{L}_{2,\{1,2, 3\}}$}};

\node[state,fill=black!14] (A_3) at (0,0) {$W_{3,\{1,2\}}$};
\draw[blue,thick, line width=0.4mm,dotted, label=right:$m_1$]     ($(A_3.north)+(-0.2,0.65)$) rectangle ($(A_3.south)+(0.2,-0.65)$);
\node[left=2mm of A_3] (name) {\bblue{\Large $\mathcal{L}_{3, \{1,2, 3\}}$}};


\draw[red,thick, line width=0.3mm, rotate=0] (-2,0) ellipse (3.2cm and 0.95cm);
\node[right=2mm of A_2] (name) {\red{\Large $\mathcal{R}_{\{1,2, 3\}}$}};

\draw[every node/.style={font=\sffamily\small}]
(A_1)  edge node {}   (A_2)

(A_2)  edge node {}   (A_3)

 (A_1) edge [bend left] node {}   (A_3);
\end{scope}
\end{tikzpicture}
\end{center}
\caption{$\mathcal{T}=\{1,2,3\}$}
\label{fig:T_1_2_3}
\end{subfigure}
\begin{subfigure}[b]{0.23\textwidth}
\begin{tikzpicture}[minimum size=10mm, node distance=1cm, every loop/.style={},
                    thick,state/.style={circle, inner sep=0pt, draw,font=\sffamily\small\bfseries}]
\begin{scope} [rotate=-90, scale=0.5, every node/.append style={scale=0.5}]
\node[state] (A_1) at (-4,0) {$W_{1,\{2,3\}}$};
\node[state,fill=black!14] (C_1) at (-4,-4) {$W_{1,\{3\}}$};
\draw[blue,thick, line width=0.4mm,dotted,label=right:$m_1$]     ($(A_1.north)+(-0.2,0.65)$) rectangle ($(C_1.south)+(0.2,-0.65)$);
\node[above=2mm of A_1] (name) {\bblue{\Large $\mathcal{L}_{1, \{1, 3\}}$}};

\node[state] (A_3) at (0,0) {$W_{3,\{1,2\}}$};
\node[state,fill=black!14] (B_3) at (0,-2) {$W_{3,\{1\}}$};
\draw[blue,thick, line width=0.4mm,dotted,label=right:$m_1$]     ($(A_3.north)+(-0.2,0.65)$) rectangle ($(B_3.south)+(0.2,-0.65)$);
\node[below=2mm of A_3] (name) {\bblue{\Large $\mathcal{L}_{3, \{1,3\}}$}};

\draw[red,thick, line width=0.3mm, rotate=25] (-3,-1.9) ellipse (3.6cm and 0.95cm);
\node (name) at (-2,-5) {\red{\Large $\mathcal{R}_{\{1,3\}}$}};

\draw[every node/.style={font=\sffamily\small}]
(A_1)  edge node {}   (B_3)

 (A_1) edge [bend left] node {}   (A_3)
(A_3)  edge [bend right =20] node {}   (C_1)

(C_1)  edge node {}   (B_3);
\end{scope}
\end{tikzpicture}
\caption{$\mathcal{T}=\{1,3\}$}
\label{fig:T_1_3}
\end{subfigure}
\begin{subfigure}[b]{0.23\textwidth}
  \begin{center}
\begin{tikzpicture}[minimum size=10mm, node distance=1cm, every loop/.style={},
                    thick,state/.style={circle, inner sep=0pt, draw,font=\sffamily\small\bfseries}]
\begin{scope} [rotate=-90, scale=0.5, every node/.append style={scale=0.5}]

\node[state] (A_2) at (-2,0) {$W_{2,\{1,3\}}$};
\node[state,fill=black!14] (C_2) at (-2,-4) {$W_{2,\{3\}}$};
\draw[blue,thick, line width=0.4mm,dotted,label=right:$m_1$]     ($(A_2.north)+(-0.2,0.65)$) rectangle ($(C_2.south)+(0.2,-0.65)$);
\node[above=1mm of A_2] (name) {\bblue{\Large $\mathcal{L}_{2, \{2, 3\}}$}};

\node[state] (A_3) at (0,0) {$W_{3,\{1,2\}}$};
\node[state,fill=black!14] (C_3) at (0,-4) {$W_{3,\{2\}}$};
\draw[blue,thick, line width=0.4mm,dotted,label=right:$m_1$]     ($(A_3.north)+(-0.2,0.65)$) rectangle ($(C_3.south)+(0.2,-0.65)$);
\node[below=1mm of A_3] (name) {\bblue{\Large $\mathcal{L}_{3, \{2, 3\}}$}};

\draw[red,thick, line width=0.3mm, rotate=0] (-1,-4) ellipse (2.0cm and 0.85cm);

\node[above=2mm of C_2] (name)  {\red{\Large $\mathcal{R}_{\{2,3\}}$}};

\draw[every node/.style={font=\sffamily\small}]

(A_2)  edge node {}   (A_3)
(A_2)  edge node {}   (C_3)

(A_3)  edge  node {}   (C_2)

(C_2)  edge node {}   (C_3);
\end{scope}
\end{tikzpicture}
\end{center}
\caption{$\mathcal{T}=\{2,3\}$}
\label{fig:T_2_3}
\end{subfigure}
\begin{subfigure}[b]{0.23\textwidth}
  \begin{center}
\begin{tikzpicture}[minimum size=10mm, node distance=1cm, every loop/.style={},
                    thick,state/.style={circle, inner sep=0pt, draw,font=\sffamily\small\bfseries}]
\begin{scope} [rotate=-90, scale=0.5, every node/.append style={scale=0.5}]
\node[state] (A_1) at (-4,0) {$W_{1,\{2,3\}}$};
\node[state,fill=black!14] (B_1) at (-4,-2) {$W_{1,\{2\}}$};
\draw[blue,thick, line width=0.4mm,dotted,label=right:$m_1$]     ($(A_1.north)+(-0.2,0.65)$) rectangle ($(B_1.south)+(0.2,-0.65)$);
\node[left=2mm of B_1] (name) {\bblue{\Large $\mathcal{L}_{1, \{1,2\}}$}};

\node[state,fill=black!14] (A_2) at (-2,0) {$W_{2,\{1,3\}}$};
\node[state] (B_2) at (-2,-2) {$W_{2,\{1\}}$};
\draw[blue,thick, line width=0.4mm,dotted,label=right:$m_1$]     ($(A_2.north)+(-0.2,0.65)$) rectangle ($(B_2.south)+(0.2,-0.65)$);
\node[left=2mm of B_2] (name) {\bblue{\Large $\mathcal{L}_{2, \{1,2\}}$}};

\draw[red,thick, line width=0.3mm, rotate=45] (-3,1.4) ellipse (2.8cm and 0.8cm);
\node[below=1mm of A_2] (name) {\red{\Large $\mathcal{R}_{\{1,2\}}$}};

\draw[every node/.style={font=\sffamily\small}]
(A_1)  edge node {}   (A_2)
(A_1)  edge node {}   (B_2)

(A_2)  edge node {}   (B_1)

(B_1)  edge node {}   (B_2);
\end{scope}
\end{tikzpicture}
\end{center}
\caption{$\mathcal{T}=\{1,2\}$}
\label{fig:T_1_2}
\end{subfigure}
\begin{subfigure}[b]{0.25\textwidth}
  \begin{center}
\begin{tikzpicture}[minimum size=10mm, node distance=1cm, every loop/.style={},
                    thick,state/.style={circle, inner sep=0pt, draw,font=\sffamily\small\bfseries}]
\begin{scope} [rotate=-90, scale=0.5, every node/.append style={scale=0.5}]
\node[state] (A_3) at (0,0) {$W_{3,\{1,2\}}$};
\node[state] (B_3) at (0,-2) {$W_{3,\{1\}}$};
\node[state] (C_3) at (0,-4) {$W_{3,\{2\}}$};
\node[state,fill=black!14] (D_3) at (0,-6) {$W_{3,\varnothing}$};
\draw[blue,thick, line width=0.4mm,dotted,label=right:$m_1$]     ($(A_3.north)+(-0.2,0.65)$) rectangle ($(D_3.south)+(0.2,-0.65)$);
\node[above=1mm of B_3] (name) {\bblue{\Large $\mathcal{L}_{3, \{3\}}$}};

\draw[red,thick, line width=0.3mm, rotate=0] (0,-6) ellipse (1cm and 0.6cm);
\node (name) at (-1.5,-6) {\red{\Large $\mathcal{R}_{\{3\}}$}};

\end{scope}
\end{tikzpicture}
\end{center}
\caption{$\mathcal{T}=\{3\}$}
\label{fig:T_3}
\end{subfigure}
\begin{subfigure}[b]{0.25\textwidth}
  \begin{center}
\begin{tikzpicture}[minimum size=10mm, node distance=1cm, every loop/.style={},
                    thick,state/.style={circle, inner sep=0pt, draw,font=\sffamily\small\bfseries}]
\begin{scope} [rotate=-90, scale=0.5, every node/.append style={scale=0.5}]
\node[state] (A_2) at (-2,0) {$W_{2,\{1,3\}}$};
\node[state] (B_2) at (-2,-2) {$W_{2,\{1\}}$};
\node[state] (C_2) at (-2,-4) {$W_{2,\{3\}}$};
\node[state,fill=black!14] (D_2) at (-2,-6) {$W_{2,\varnothing}$};
\draw[blue,thick, line width=0.4mm,dotted,label=right:$m_1$]     ($(A_2.north)+(-0.2,0.65)$) rectangle ($(D_2.south)+(0.2,-0.65)$);
\node[above=1mm of B_2] (name) {\bblue{\Large $\mathcal{L}_{2, \{2\}}$}};

\draw[red,thick, line width=0.3mm, rotate=0] (-2,-6) ellipse (1cm and 0.6cm);
\node (name) at (-3.5,-6) {\red{\Large $\mathcal{R}_{\{2\}}$}};

\end{scope}
\end{tikzpicture}
\end{center}
\caption{$\mathcal{T}=\{2\}$}
\label{fig:T_2}
\end{subfigure}
\begin{subfigure}[b]{0.5\textwidth}
  \begin{center}
\begin{tikzpicture}[minimum size=10mm, node distance=1cm, every loop/.style={},
                    thick,state/.style={circle, inner sep=0pt, draw,font=\sffamily\small\bfseries}]
\begin{scope} [rotate=-90, scale=0.5, every node/.append style={scale=0.5}]
\node[state] (A_1) at (-4,0) {$W_{1,\{2,3\}}$};
\node[state] (B_1) at (-4,-2) {$W_{1,\{2\}}$};
\node[state] (C_1) at (-4,-4) {$W_{1,\{3\}}$};
\node[state,fill=black!14] (D_1) at (-4,-6) {$W_{1,\varnothing}$};
\draw[blue,thick, line width=0.4mm,dotted,label=right:$m_1$]     ($(A_1.north)+(-0.2,0.65)$) rectangle ($(D_1.south)+(0.2,-0.65)$);
\node[above=1mm of B_1] (name) {\bblue{\Large $\mathcal{L}_{1, \{1\}}$}};

\draw[red,thick, line width=0.3mm, rotate=0] (-4,-6) ellipse (1cm and 0.6cm);
\node (name) at (-5.5,-6) {\red{\Large $\mathcal{R}_{\{1\}}$}};

\end{scope}
\end{tikzpicture}
\end{center}
\caption{$\mathcal{T}=\{1\}$}
\label{fig:T_1}
\end{subfigure}%
\caption{(\ref{fig:T_1_2_3})-(\ref{fig:T_1}): Graphical representation of sets $\mathcal{L}_{j,\mathcal{T}}$ for the first iteration of Algorithm \ref{my_delivery} applied to Example \ref{first_example}. Sets $\mathcal{L}_{j,\mathcal{T}}$ are denoted by blue rectangles. For each set $\mathcal{L}_{j,\mathcal{T}}$, the corresponding subfile $V_{j,\mathcal{T}}$ is denoted in dark color. Sets $\mathcal{R}_{\mathcal{T}}$ are denoted by red ellipses.}
\label{fig:proof}
\end{figure}
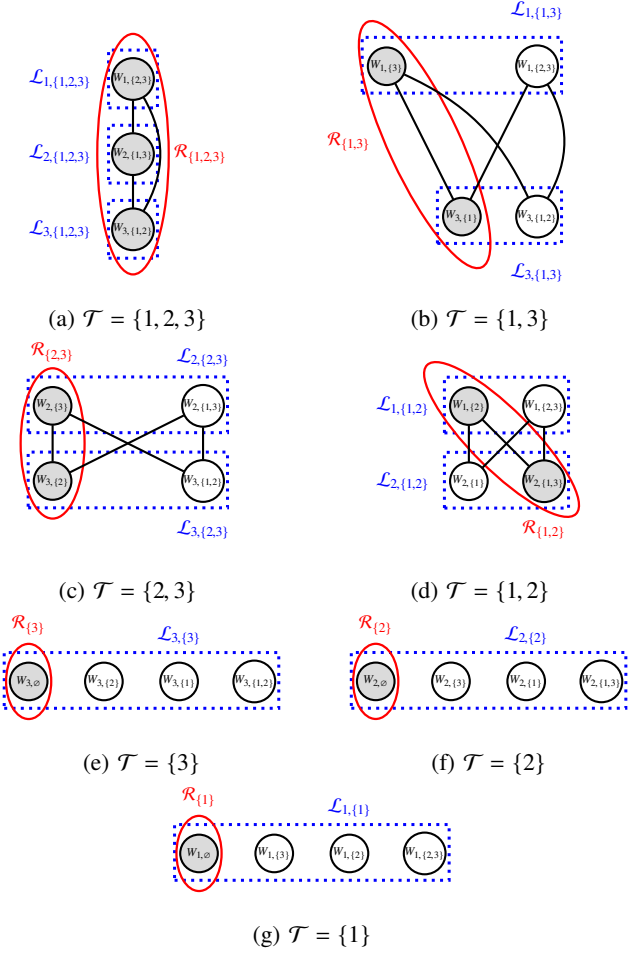
\end{small}

Note that according to Theorem \ref{equivalent_function}, the function \textsc{PBO} in Algorithm \ref{first_delivery} can be replaced by the function \textsc{SBO} without altering the optimality bound for the algorithm.
By doing so, 
we can also eliminate the steps of updating set $\mathscr{S}$ (lines 8-10 of Algorithm \ref{first_delivery}) since we do not need $\mathscr{S}$ as input to the function \textsc{SBO}.
The resulting algorithm is described in Algorithm \ref{my_delivery}. The optimality bound 
of Algorithm \ref{my_delivery} is stated in the following theorem.
\begin{thm}
\label{optimal_delivery_solution_second}
Algorithm \ref{my_delivery} \black{achieves a $(1 + \log K)$-approximation} to Problem \ref{optimal_delivery_N}.
\end{thm}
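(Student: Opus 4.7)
The plan is to piggyback on Lemma~\ref{optimal_delivery_solution_first} by arguing that Algorithm~\ref{my_delivery} is functionally a realization of the same greedy weighted set cover procedure whose approximation ratio is established there. First I would observe that Algorithm~\ref{my_delivery} differs from Algorithm~\ref{first_delivery} in only two ways: \textsc{SBO} is called instead of \textsc{PBO}, and the pruning of $\mathscr{S}$ in lines~8--10 of Algorithm~\ref{first_delivery} is omitted. The pruning makes $\mathscr{S}$, at the start of each iteration of Algorithm~\ref{first_delivery}, equal to the set of all cliques of the side-information graph contained in the current uncovered set $\mathcal{E}$; Algorithm~\ref{my_delivery} simply reconstructs this information from $\mathcal{E}$ on the fly inside \textsc{SBO}, so the omission is harmless.

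The substantive step is to show that in every iteration the clique returned by \textsc{SBO} is as cost-effective as that returned by \textsc{PBO}, i.e., both attain the maximum of $|\mathcal{P}|/\bc{\mathcal{P}}$ over cliques contained in $\mathcal{E}$. Theorem~\ref{equivalent_function} already delivers the equality of bit-sizes, $\bc{\textsc{SBO}(\mathcal{E})}=\bc{\textsc{PBO}(\mathscr{S})}$. Using the decomposition $\mathscr{S}=\cup_{\mathcal{T}}\mathscr{Q}_{\mathcal{T}}$ described before Example~\ref{example:SBO_PBO}, in which every $\mathcal{P}\in\mathscr{Q}_{\mathcal{T}}$ satisfies $|\mathcal{P}|=|\mathcal{T}|$, the clique returned by \textsc{SBO} is $\mathcal{R}_{\mathcal{T}^*}$ with $\mathcal{T}^*\in\argmax_{\mathcal{T}}|\mathcal{T}|/\bc{\mathcal{R}_{\mathcal{T}}}$, so $|\textsc{SBO}(\mathcal{E})|=|\mathcal{T}^*|$ and its cost-effectiveness is $|\mathcal{T}^*|/\bc{\mathcal{R}_{\mathcal{T}^*}}=\max_{\mathcal{P}\in\mathscr{S}}|\mathcal{P}|/\bc{\mathcal{P}}$. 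Hence both functions return cliques attaining the same maximum cost-effectiveness over the cliques of $\mathcal{E}$.

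With this equivalence in hand, the $(1+\log K)$ bound follows as in the proof of Lemma~\ref{optimal_delivery_solution_first}: the harmonic-sum analysis of greedy weighted set cover depends only on the ratio attained at each iteration, not on the identity of the chosen set, and since every clique in the side-information graph contains at most one subfile per user it has cardinality at most $K$, producing the factor $H_K\leq 1+\log K$. The main obstacle is that \textsc{SBO} and \textsc{PBO} may return \emph{different} cliques of equal ratio, so the two algorithms can diverge on subsequent iterations; this prevents a direct ``same execution trace'' reduction and forces the proof to invoke the ratio-only nature of the greedy set cover analysis rather than merely inheriting the conclusion of Lemma~\ref{optimal_delivery_solution_first} verbatim.
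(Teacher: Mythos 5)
Your proof is correct and takes essentially the same route as the paper's: Algorithm~\ref{my_delivery} is Algorithm~\ref{first_delivery} with \textsc{PBO} replaced by \textsc{SBO}, Theorem~\ref{equivalent_function} (via the $\mathscr{Q}_{\mathcal{T}}$ decomposition) guarantees the selected clique attains the same maximum of $|\mathcal{P}|/\bc{\mathcal{P}}$ over cliques of the current $\mathcal{E}$, and the bound of Lemma~\ref{optimal_delivery_solution_first} then carries over. Your closing observation --- that the two algorithms may return different maximizers on ties, so one must invoke the fact that the greedy set-cover analysis depends only on the ratio attained at each iteration rather than on a literal execution-trace equivalence --- is a subtlety the paper's own proof glosses over, and it makes your argument slightly more careful.
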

\begin{proof}
See Appendix \ref{proof:optimal_delivery}.
\end{proof}
\begin{small}
\begin{algorithm}[h]
\caption{SACM (Size-Aware Coded Multicast)}
\label{my_delivery}
\hspace*{\algorithmicindent} {\small \textbf{Input:} Set of subfiles $\mathcal{W}$.}
\begin{algorithmic}[1]
\State $\mathscr{C} = \emptyset$;
\State $\mathcal{E} =\mathcal{W}$;
\While{$\mathcal{E} \neq \emptyset$}
\State $\mathcal{P^{\black{*}}} =$ \textsc{SBO}$(\mathcal{E})$;
\State $\mathscr{C} = \mathscr{C} \cup \{\mathcal{P^{\black{*}}}\}$;
\State $\mathcal{E} = \mathcal{E} \setminus \mathcal{P^{\black{*}}}$;
\EndWhile
\end{algorithmic}
\hspace*{\algorithmicindent} {\small \textbf{Output:} Set of cliques $\mathscr{C}$.}
\end{algorithm} 
\end{small}
If we apply Algorithm \ref{my_delivery} to Example \ref{first_example}, the \black{cliques} $\mathcal{P}_{\aleph, \{3\},\{2\}}$, $\mathcal{P}_{\{2\},\{1,3\}, \aleph}$, $\mathcal{P}_{\{3\},\aleph, \{1\}}$, $\mathcal{P}_{\varnothing, \aleph, \aleph}$, $\mathcal{P}_{\aleph, \varnothing, \aleph}$, $\mathcal{P}_{\aleph, \aleph,\varnothing}$,
$\mathcal{P}_{\aleph, \aleph,\{1,2\}}$, and $\mathcal{P}_{\{2,3\},\{1\}, \aleph}$ are chosen with the total size of $7 \times 10 +300 = 370$. Fig. \ref{fig:proof} illustrates a graphical representation of sets $\mathcal{L}_{j,\mathcal{T}}$, subfiles $V_{j,\mathcal{T}}$, and sets $\mathcal{R}_{\mathcal{T}}$ for the first iteration of Algorithm \ref{my_delivery} applied to Example \ref{first_example}. Compared to \black{the} total number of bits for \black{the} optimal solution to Example \ref{first_example}, we can see that for this example, the total number of bits for our algorithm is equal to that of \black{the} optimal solution.

{\color{black}
\subsection{Comparison to State-of-the-Art}
\label{sec:approximation_ratio}
We theoretically compare the worst\black{-}case approximation ratio between our proposed algorithm with the uncoded delivery, the GCM scheme \cite{maddah2014fundamental,maddah2015decentralized,niesen2017coded,Zhang_diff_sizes_2015}, \black{GCC scheme \cite{ji2014order}, GCLC and HgLC schemes \cite{ji2015efficient}}, and the GCCM scheme \cite{hueristics_index_coding}\footnote{We do not compare with the proposed schemes of \cite{Ramakrishnan_2015, Wan_2017} as these algorithms take advantage of the \black{sub-packetization} which is not allowed in the problem we study in this paper.}. While Theorem \ref{optimal_delivery_solution_second} states that the approximation ratio for our proposed algorithm is $(1 + \log K)$, the following result show\black{s} that the ratio for the other mentioned schemes is linear in $K$.

\begin{thm}
\label{thm:perfromance_SID}
For a system with $K$ users, there are instances on which the approximation ratio~of
\begin{enumerate}
\item the uncoded delivery is only $K$,
\item the GCM (greedy coded multicast) scheme (and similarly \black{GCC, GCLC, and HgLC} schemes) is only $\lfloor \frac{K-1}{2}\rfloor $,
\item the GCCM (graph coloring-based coded multicast) scheme is only $K-1$.
\end{enumerate}
\end{thm}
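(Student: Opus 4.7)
The plan is to prove each of the three bounds by exhibiting an explicit worst-case instance. In each case I specify the set $\mathcal{W}$ of subfiles together with their sizes, compute the cost of the scheme on this instance, upper-bound the optimal clique cover cost, and take the ratio.

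For part (1), take $\mathcal{W} = \{W_{k,[K]\setminus\{k\}} : k \in [K]\}$ with every subfile of size one bit. For any $k' \neq k$, user $k'$ has $W_{k,[K]\setminus\{k\}}$ in its cache since $k' \in [K]\setminus\{k\}$, so these $K$ vertices of $\mathcal{G}_c$ are pairwise adjacent and form one clique. The optimum sends a single XOR packet of size $1$, whereas the uncoded scheme transmits each subfile separately for $K$ bits, giving ratio exactly $K$.

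For part (2), I exploit the fact that the GCM/GCC/GCLC/HgLC schemes all use the Maddah-Ali--Niesen style rule which places $W_{k,\mathcal{A}}$ into the packet indexed by $\{k\}\cup\mathcal{A}$ regardless of its size. Take $K = 2m+1$ with $m = \lfloor(K-1)/2\rfloor$, let the heavy subfiles $W_{i,\mathcal{A}_i}$ with $\mathcal{A}_i = (\{1,\ldots,m\}\setminus\{i\})\cup\{m+i\}$ for $i=1,\ldots,m$ each have size $L$, and include $W_{k,\varnothing}$ of size $1$ for every $k\in[K]$. The index sets $\{i\}\cup\mathcal{A}_i = \{1,\ldots,m\}\cup\{m+i\}$ are pairwise distinct, so each heavy is forced into its own structural packet and GCM's total cost is at least $mL + K$. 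At the same time $j \in \mathcal{A}_i$ for every $j\in\{1,\ldots,m\}\setminus\{i\}$, so the $m$ heavies form a clique in $\mathcal{G}_c$; the optimum XORs them into one packet of size $L$ and sends each singleton $W_{k,\varnothing}$ alone, for a total cost of $L + K$. Letting $L \to \infty$ drives the ratio to $m = \lfloor(K-1)/2\rfloor$. For even $K$, the same construction with $m = \lfloor K/2 \rfloor$ gives a ratio exceeding the claimed floor.

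For part (3), I adopt an analogous philosophy for GCCM, which greedily selects vertices of $\mathcal{G}_c$ and extends them into cliques without regard to subfile size. The strategy is to plant $K-1$ heavy subfiles arranged so that they jointly form a single clique in $\mathcal{G}_c$, together with small-size padding vertices whose presence exploits GCCM's greedy selection and tie-breaking so as to force each heavy into a separate cover clique; taking the heavy size $L$ to infinity then yields ratio $K-1$. The main obstacle is part (2): the index arithmetic must simultaneously guarantee clique compatibility of the $m$ heavies via the first block of $\mathcal{A}_i$ and force the GCM-indexing sets $\{i\}\cup\mathcal{A}_i$ to be all distinct via the extra singleton $\{m+i\}$. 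Part (3) is more delicate because the adversarial instance must pin down GCCM's greedy choices; a careful layering of small-size vertices around each heavy, designed to be chosen earlier than the heavies in GCCM's greedy order, will accomplish this.
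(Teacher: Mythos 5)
Parts (1) and (2) of your proposal are sound. Part (1) is the paper's own construction. Part (2) is a genuinely different instance from the paper's: the paper uses the nested sets $\mathcal{A}_j=[K-(j-1)]\setminus\{j\}$ for $j=1,\dots,l^*=\lfloor\frac{K+1}{2}\rfloor$, while you use $\mathcal{A}_i=(\{1,\dots,m\}\setminus\{i\})\cup\{m+i\}$; both achieve the same effect, namely a set of roughly $K/2$ heavy subfiles that pairwise satisfy the clique condition $i\in\mathcal{A}_j$, $j\in\mathcal{A}_i$ yet have pairwise distinct GCM types $\{i\}\cup\mathcal{A}_i$, so GCM is forced to pay for each heavy separately while the optimum XORs them once. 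Two small remarks: the $K$ padding subfiles $W_{k,\varnothing}$ are unnecessary and actually cost you the exact bound --- without them the ratio is exactly $m$ for every $L$, whereas with them you only approach $m$ in the limit $L\to\infty$ (the paper secures a clean $\geq\lfloor\frac{K-1}{2}\rfloor$ by using one extra heavy, $l^*=\lfloor\frac{K-1}{2}\rfloor+1$ of them, and absorbing the slack). Also note the paper's construction works uniformly for all $K$ without a parity case split.

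Part (3), however, has a genuine gap: you state the strategy (a clique of heavies plus light ``bait'' vertices that trap GCCM into covering each heavy separately) but never exhibit the instance, and the instance \emph{is} the proof. The paper's construction shows what is needed and why it is not automatic: take $\mathcal{W}'=\{W_{k,[K]\setminus\{k\}}:k\in[K]\}$ of size $B$ (a $K$-clique) and $\mathcal{W}''=\{W_{1,\{2\}},W_{2,\{3\}},\dots,W_{K,\{1\}}\}$ of size $\epsilon$. The index sets are chosen so that each light vertex $W_{k,\{k+1\}}$ is adjacent to exactly one heavy vertex ($W_{k+1,[K]\setminus\{k+1\}}$) and to no other light vertex; under the adversarial vertex ordering that processes the lights first, GCCM's greedy clique extension then pairs each light with a distinct heavy, yielding $K$ packets of size $B$ versus an optimum of $B+K\epsilon$, for ratio $\geq K-1$. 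Without specifying the neighborhood structure of the light vertices (each must see exactly one heavy, and the lights must be mutually non-adjacent so the optimum can dispose of them at cost $K\epsilon$) and the tie-breaking order, the claim that ``a careful layering \dots will accomplish this'' is an unproved assertion, not an argument. To complete your proof you need to write down this instance (or an equivalent one) and verify the adjacency and ordering claims explicitly.
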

\begin{proof}
See Appendix \ref{proof:thm:perfromance_SID}.
\end{proof}
According to Theorems \ref{optimal_delivery_solution_second} and \ref{thm:perfromance_SID}, our proposed algorithm (Algorithm \ref{my_delivery}) provides a significantly better approximation ratio compared to the above schemes. 

\section{Complexity Analysis}
\label{clique_cover_description}

In this section, we first analyze the complexity of SACM algorithm we proposed in Section \ref{sec:approximation}.
It can be seen that SACM needs $O(\tau_K K2^{K})$ operations where $\tau_K = |\mathcal W|$ is the number of non-empty subfiles and it can change\footnote{It is $K2^{K-1}$ when for each $k \in [K]$ we have $\bc{W_{k,\mathcal{A}}} \neq 0$ for all $\mathcal{A} \subseteq [K] \setminus \{k\}$.} from $K$ to $K2^{K-1}$. The GCM scheme \cite{maddah2014fundamental,maddah2015decentralized,niesen2017coded,Zhang_diff_sizes_2015} needs $O(K2^{K})$ operations while \black{the GCC, GCLC, and HgLC schemes} as well as the GCCM scheme \cite{hueristics_index_coding} require $O(\tau^2_K)$. \black{Note that in most existing placement schemes, e.g., \cite{maddah2015decentralized,niesen2017coded,ji2014order}, almost all subfiles are non-empty, that is, the number of subfiles is exponential in $K$ (more precisely, $\tau_K \approx K2^{K-1}$). This means that under these placement schemes, SACM, GCM, \black{GCC, GCLC, HgLC, }and GCCM all have the same complexity which is exponential with respect to the number $K$ of users. If the number of subfiles is not exponential in $K$ \cite{maddah2014fundamental}, then \black{GCC, GCLC, HgLC, }and GCCM schemes achieve lower complexity compared to SACM and GCM schemes.}

Next, we show that the exponential complexity (in \black{the} number $K$ of users) is inevitable for any algorithm with good approximation ratio for Problem \ref{optimal_delivery_N}. This result which is presented in the following theorem is based on the facts that minimum clique cover problem is a special case of Problem \ref{optimal_delivery_N} and it is NP-hard to find a polynomial\black{-}time approximation for the minimum clique cover problem \cite{zuckerman2006linear}.

\begin{thm}
\label{hardness_approximation}
Unless P = NP, there is no polynomial\black{-}time (in \black{the} number $K$ of users) algorithm for Problem \ref{optimal_delivery_N} with the approximation ratio of $K^{1-\varepsilon}$ for any $\varepsilon>0$.
\end{thm}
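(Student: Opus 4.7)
The plan is to prove the statement by reducing the minimum clique cover problem to Problem \ref{optimal_delivery_N} through a size-preserving reduction, so that any polynomial-time $K^{1-\varepsilon}$-approximation for Problem \ref{optimal_delivery_N} would yield an $n^{1-\varepsilon}$-approximation for minimum clique cover on an $n$-vertex graph, contradicting the Zuckerman inapproximability result \cite{zuckerman2006linear} unless P $=$ NP.

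First, I would construct a polynomial-time reduction. Given an arbitrary undirected graph $G = (V, E)$ with $V = [n]$, I build an instance of Problem \ref{optimal_delivery_N} with $K = n$ users as follows. Each user $k$ requests a file $W_k$ whose only non-empty subfile is $W_{k, \mathcal{A}_k}$, where $\mathcal{A}_k := \{l \in [K]\setminus\{k\}:\{k,l\}\in E\}$, and I set $\bc{W_{k,\mathcal{A}_k}} = 1$ for every $k$. The caches are populated consistently: $W_{k,\mathcal{A}_k} \in Z_l$ iff $l \in \mathcal{A}_k$. This is a valid placement and $\mathcal{W} = \{W_{k,\mathcal{A}_k} : k \in [K]\}$ has exactly $K$ elements, and the reduction clearly runs in time polynomial in $n$.

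Next, I would show the side-information graph coincides with $G$. By Remark \ref{rem:side_information}, $\mathcal{G}_c$ has $\mathcal{W}$ as its vertex set and an edge between $W_{k,\mathcal{A}_k}$ and $W_{l,\mathcal{A}_l}$ iff both $l \in \mathcal{A}_k$ and $k \in \mathcal{A}_l$; by the symmetric choice of $\mathcal{A}_k$, this holds iff $\{k,l\} \in E$, so $\mathcal{G}_c$ is isomorphic to $G$. Since all vertices have weight $1$, every clique $\mathcal{P}$ satisfies $\bc{\mathcal{P}} = \max_{W\in\mathcal{P}}\bc{W}=1$, so the objective of Problem \ref{optimal_delivery_N} on this instance simply counts the number of chosen cliques, and the feasibility constraint requires them to cover all vertices of $G$. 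Therefore, Problem \ref{optimal_delivery_N} restricted to these instances is exactly the minimum clique cover problem on $G$, and its optimal value equals $\chi(\bar G)$.

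Finally, I would close the argument by contradiction. If a polynomial-time (in $K$) algorithm for Problem \ref{optimal_delivery_N} achieved approximation ratio $K^{1-\varepsilon}$, composing it with the reduction would give a polynomial-time $n^{1-\varepsilon}$-approximation for minimum clique cover on an arbitrary $n$-vertex graph, equivalently for the chromatic number of its complement; Zuckerman's theorem \cite{zuckerman2006linear} rules this out unless P $=$ NP. The main subtlety of the argument is checking that the reduction is size-preserving (so $K = n$ and the inapproximability factor transfers without loss): this hinges on being able to realize any graph as a side-information graph using one subfile per user, which the free choice of $\mathcal{A}_k \subseteq [K]\setminus\{k\}$ provides, and on the unit-size choice of subfiles, which collapses the weighted objective of Problem \ref{optimal_delivery_N} onto the cardinality of the clique cover.
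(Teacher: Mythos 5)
Your proposal is correct and follows essentially the same route as the paper: realize an arbitrary $n$-vertex graph as the side-information graph of an instance with $K=n$ users, one unit-size subfile $W_{k,\mathcal{A}_k}$ per user with $\mathcal{A}_k$ the neighborhood of $k$, so that Problem \ref{optimal_delivery_N} collapses to minimum clique cover, and then invoke Zuckerman's inapproximability result. Your write-up is somewhat more explicit than the paper's in verifying that the symmetric choice of the sets $\mathcal{A}_k$ makes $\mathcal{G}_c$ isomorphic to $G$ and that the placement is realizable, but the construction and the contradiction are identical.
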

\begin{proof}
See Appendix \ref{proof:hardness_approximation}.
\end{proof}

\section{Numerical Experiments}
\label{sec:num_exp}
In this section, we compare our proposed algorithm, that is, Size-Aware Coded Multicast (SACM) with the GCM
(greedy coded multicast) scheme \cite{maddah2014fundamental,maddah2015decentralized,niesen2017coded,Zhang_diff_sizes_2015}, 
\black{GCC scheme \cite{ji2014order}, GCLC and HgLC schemes \cite{ji2015efficient}}, the GCCM scheme \cite{hueristics_index_coding}, and \black{the} uncoded delivery. For a system with $K$ users, let $\tau_K$ denote the number of subfiles with non-zero size required to be sent by the server, that is, $\tau_K =  |\mathcal{W}|$. We assume that the size of each subfile in the set $\mathcal{W}$ can be between $1$ to $1000$ bits. In order to make sure that our comparison does not depend on a particular content placement, we take $100$ samples uniformly among all placement possibilities. We consider a system with the number of users $K=3,6,8,10$. For each system, the number $\tau_K$ of non-zero subfiles can change from $K$ to $K2^{K-1}$.
For $K=3$, since the number $\tau_3$ of subfile is at most 12, it is possible to find all feasible packets and solve the ILP of Problem \ref{ILP_problem}. This is the reason the optimal solution of Problem \ref{optimal_delivery_N} is only calculated for $K=3$. 
\begin{figure}
\centering
\begin{subfigure}{0.47\textwidth}
 \begin{center}
\includegraphics[width=8.8cm,height=4.5cm]{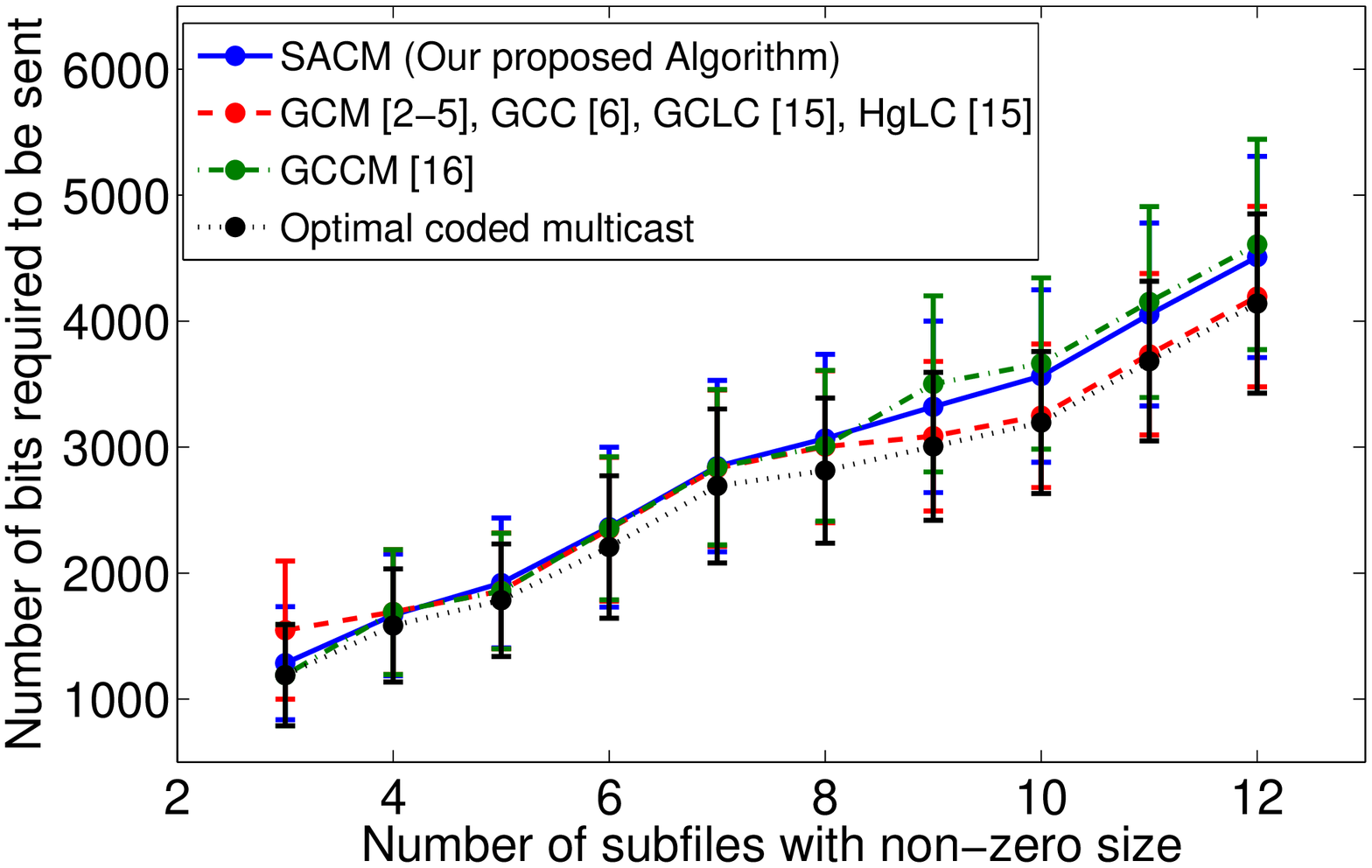}
\end{center}
\caption{{\footnotesize Comparison of our proposed algorithm (SACM), GCM, \black{GCC, GCLC, HgLC,} GCCM, and the optimal solution of Problem 1 for a system of $K=3$ users.}}
\label{fig:N3_err}
\end{subfigure}
\begin{subfigure}{0.47\textwidth}
 \begin{center}
\includegraphics[width=8.8cm,height=4.5cm]{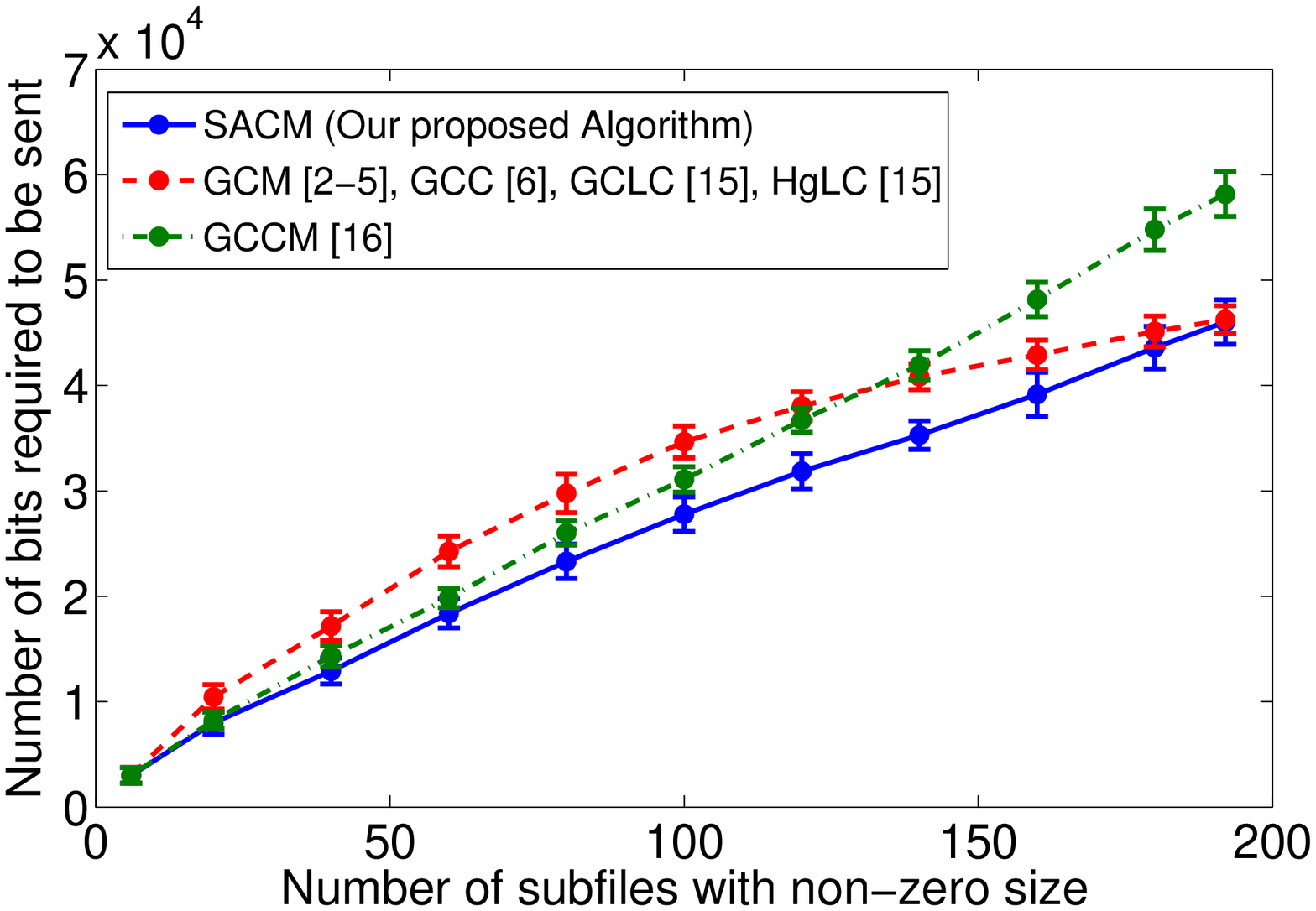}
\end{center}
\caption{{\footnotesize Comparison of our proposed algorithm (SACM), \black{GCC, GCLC, HgLC,}  and GCCM for a system of $K=6$ users.}}
\label{fig:N6_err}
\end{subfigure}
\begin{subfigure}{0.47\textwidth}
\begin{center}
\includegraphics[width=8.8cm,height=4.5cm]{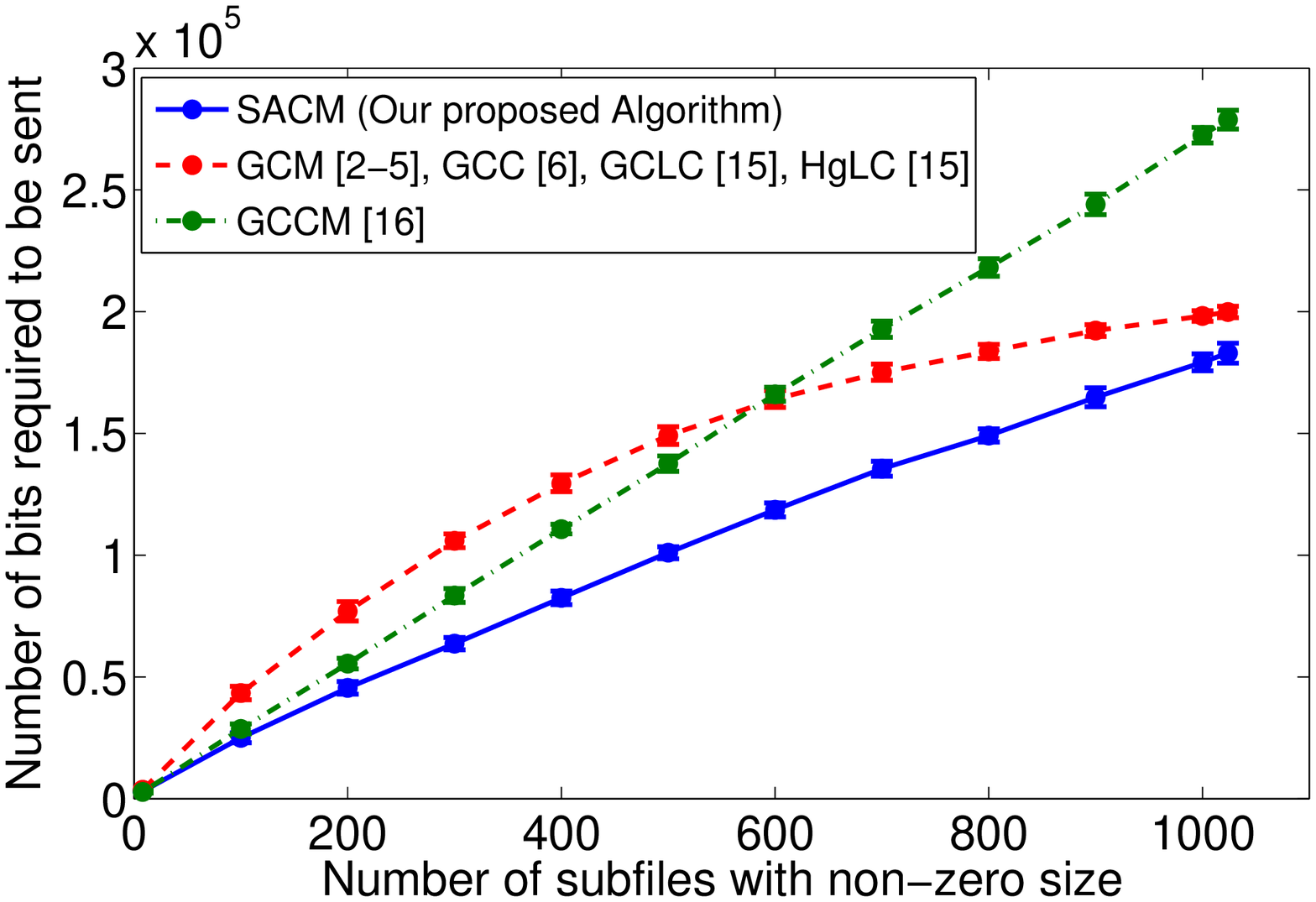}
\end{center}
\caption{{\footnotesize Comparison of our proposed algorithm (SACM), GCM, \black{GCC, GCLC, HgLC,} and GCCM for a system of $K=8$ users.}}
\label{fig:N8_err}
\end{subfigure}
\begin{subfigure}{0.47\textwidth}
 \begin{center}
\includegraphics[width=8.8cm,height=4.5cm]{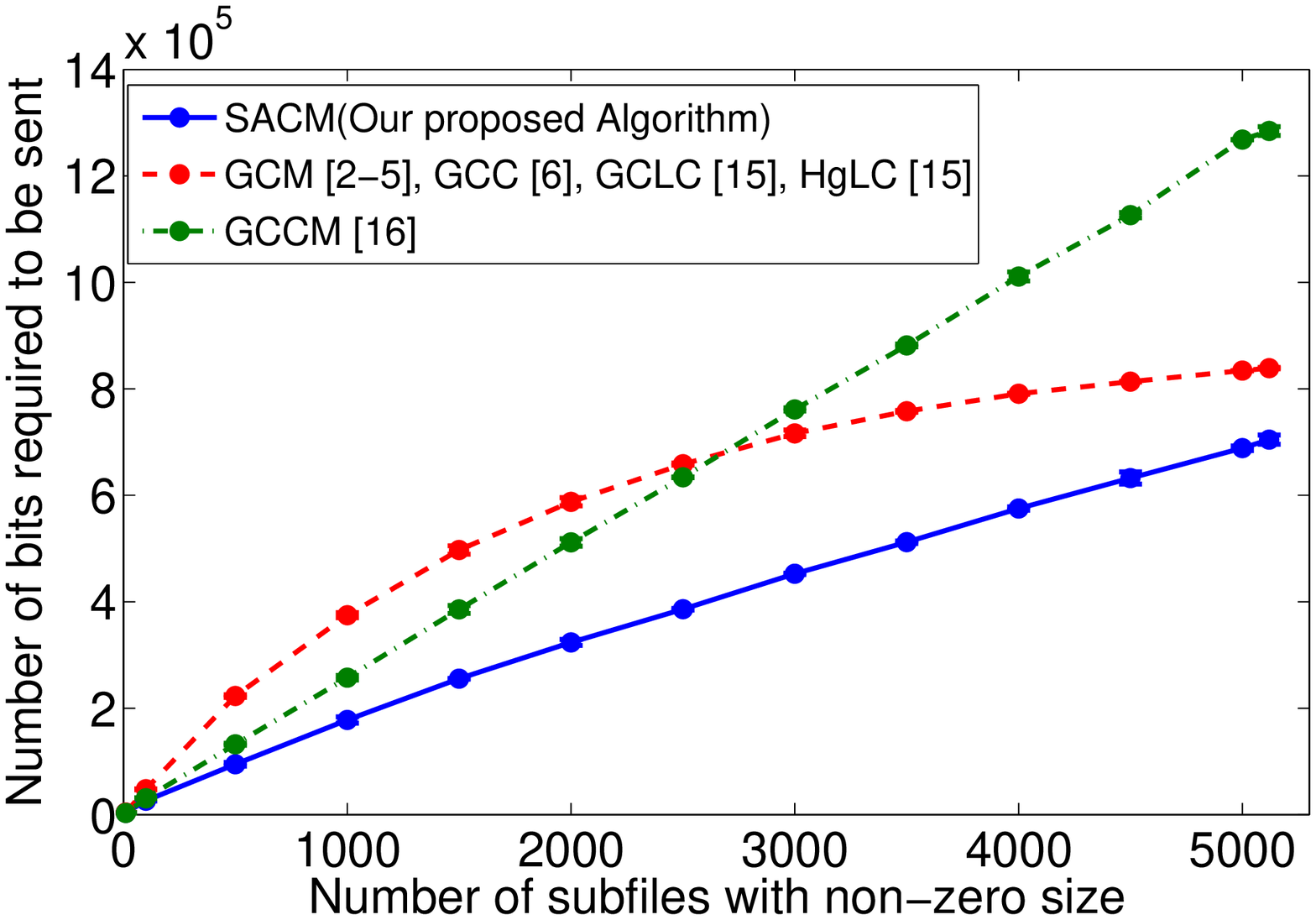}
\end{center}
\caption{{\footnotesize Comparison of our proposed algorithm (SACM), \black{GCC, GCLC, HgLC,}  and GCCM for a system of $K=10$ users.}}
\label{fig:N10_err}
\end{subfigure}
\caption{}
\end{figure}
Fig. \ref{fig:N3_err}-\ref{fig:N6_err} shows the average number of bits requires to be sent by the \black{server} versus the number of subfiles with non-zero size for $K=3,6$. As can be seen from Fig. \ref{fig:N3_err}, when $K=3$, all coded delivery schemes (that is, SACM, GCM, \black{GCC, GCLC, HgLC,} and GCCM schemes) perform almost the same as the optimal \black{clique cover delivery scheme} as their confidence intervals overlap. However, as can be observed from Fig. \ref{fig:N6_err}, when $K=6$, our proposed algorithm (SACM) outperforms GCM, \black{GCC, GCLC, HgLC,} and GCCM. Further, it can be observed that as the number of subfiles with non-zero size increases, the ratio of \black{the} number of bits required to be sent by GCCM scheme to that of our proposed algorithm (SACM) increases while the ratio of GCM, \black{GCC, GCLC, and HgLC schemes} to our algorithm (SACM) increases at first and then decreases.

In order to study the effect of number $K$ of users, we have plotted the average number of bits requires to be sent by the \black{server} versus the number of subfiles with non-zero size for $K=8,10$. As can be seen from Fig. \ref{fig:N3_err}-\ref{fig:N10_err}, our proposed algorithm (SACM) significantly outperforms GCM, \black{GCC, GCLC, HgLC,} and GCCM schemes when there are more users. Our simulation results indicate that for a system of 10 users, our proposed algorithm (SACM) can reduce the number of bits required to be sent between 15\% to 45\% compared to GCCM scheme and between 16\% to 57\% compared to GCM, GCC, GCLC, and HgLC schemes.
\begin{small}
\begin{table}[h]
\caption{The bandwidth reduction for our proposed algorithm compared to uncoded delivery. For each number $K$ of users, the range indicates the lower and upper bounds of the bandwidth reduction for \black{a} various number of subfiles with non-zero sizes. }
\begin{center}
\begin{tabular}{l || c | c | c | c}
Number of users ($K$)             & 3 & 6 & 8 & 10  \\
\hline
Bandwidth Reduction \% & 17 to 24 & 25 to 60 & 43 to 62 & 46 to 72
\end{tabular}
\end{center}
\label{ours_uncoded_comprison}
\end{table}
\end{small}

Table \ref{ours_uncoded_comprison} indicates the bandwidth reduction for our proposed algorithm compared to \black{the} uncoded delivery. As can be seen from the table, for a system of 10 users, our proposed algorithm can reduce the number of bits required to be sent up to 72\%.
\section{Conclusion}
\label{sec:conc}
In this work, given an arbitrary content placement, we formulated the optimal \black{clique cover delivery problem} and showed that it can be represented as an Integer Linear Program (ILP). We then proposed an approximation algorithm for the optimal \black{clique cover delivery problem} by investigating the connection between our problem and the weighted set cover problem. We showed that for a system with $K$ users, our proposed algorithm provides $(1 + \log K)$-approximation for the optimal \black{clique cover delivery problem}, while the approximation ratio for the existing coded delivery schemes is linear in $K$. Further\black{,} through simulations, we showed that our proposed algorithm can remarkably decrease the bandwidth required for satisfying the demands of the users compared to the existing coded delivery schemes for almost all content placement schemes.

\bibliographystyle{ieeetr}
\bibliography{caching_ref}

\appendices
\section{Proof of Lemma \ref{optimal_delivery_solution_first}} 
 \label{proof:optimal_delivery_solution_first}
The proof is based on the following result for \black{the} minimum weighted set cover problem.
\begin{lem}
\label{co:optimal_delivery_solution_first}
Let $d$ be the size of the largest set $\mathcal{P} \in \mathscr{P}$. Then, Algorithm \ref{algorithm_weighted_Set_cover} provides a $(1+ \log d)$-approximation to the minimum weighted set cover problem where $\mathcal{W}$ is the set of elements that needs to be covered, $\mathscr{P}$ is the set of subsets of $\mathcal{W}$, and for each $\mathcal{P} \in \mathscr{P}$, $\bc{\mathcal{P}}$ is the weight of subset $\mathcal{P}$ of elements \cite{chvatal1979greedy}\footnote{Note that this algorithm is exactly the one proposed in \cite{chvatal1979greedy} which has been intentionally described as in Algorithm \ref{algorithm_weighted_Set_cover} for the sake of comparison with Algorithm \ref{first_delivery}.}.
\end{lem}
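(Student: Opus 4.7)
The plan is to reproduce Chv\'atal's classical dual-fitting argument for the greedy weighted set cover algorithm. First I would describe the algorithm in question: at each iteration, on the residual ground set $\mathcal{E}$, the greedy selects a subset $\mathcal{P}^{*}$ from the remaining subsets maximizing the ratio $|\mathcal{P}\cap\mathcal{E}|/\bc{\mathcal{P}}$ (equivalently, minimizing the price per newly covered element), adds it to the output cover $\mathscr{C}$, and deletes the covered elements from $\mathcal{E}$. The goal is to compare $\sum_{\mathcal{P}\in\mathscr{C}}\bc{\mathcal{P}}$ to the optimal cover cost $\mathrm{OPT}$.

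Next I would introduce a \emph{charging scheme}: whenever the greedy picks $\mathcal{P}^{*}$ and it covers $t=|\mathcal{P}^{*}\cap\mathcal{E}|$ fresh elements, assign to each of those $t$ elements a price $p(W)=\bc{\mathcal{P}^{*}}/t$. By construction, every element $W\in\mathcal{W}$ receives exactly one price, and
\[
\sum_{\mathcal{P}\in\mathscr{C}}\bc{\mathcal{P}} \;=\; \sum_{W\in\mathcal{W}} p(W).
\]

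The central step is a \emph{per-set harmonic bound}: for every $\mathcal{Q}\in\mathscr{P}$,
\[
\sum_{W\in\mathcal{Q}} p(W) \;\le\; \bc{\mathcal{Q}}\,H(|\mathcal{Q}|), \qquad H(n)=\sum_{i=1}^{n}\tfrac{1}{i}.
\]
To prove it, order the elements of $\mathcal{Q}$ as $W_{1},W_{2},\ldots,W_{|\mathcal{Q}|}$ in the order in which the greedy algorithm covers them. Just before $W_{i}$ is covered, at least $|\mathcal{Q}|-i+1$ elements of $\mathcal{Q}$ are still uncovered, so $\mathcal{Q}$ itself is a candidate in $\mathscr{S}$ with coverage-per-weight ratio at least $(|\mathcal{Q}|-i+1)/\bc{\mathcal{Q}}$. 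Because the greedy picks a set whose ratio is at least as large, the chosen set's per-element price is at most $\bc{\mathcal{Q}}/(|\mathcal{Q}|-i+1)$, and this is exactly $p(W_{i})$. Summing over $i=1,\ldots,|\mathcal{Q}|$ gives the harmonic bound.

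Finally, let $\mathscr{P}^{*}\subseteq\mathscr{P}$ be an optimal cover. Since $\bigcup_{\mathcal{Q}\in\mathscr{P}^{*}}\mathcal{Q}\supseteq\mathcal{W}$,
\[
\sum_{\mathcal{P}\in\mathscr{C}}\bc{\mathcal{P}} = \sum_{W\in\mathcal{W}}p(W) \le \sum_{\mathcal{Q}\in\mathscr{P}^{*}}\sum_{W\in\mathcal{Q}}p(W) \le \sum_{\mathcal{Q}\in\mathscr{P}^{*}}\bc{\mathcal{Q}}\,H(|\mathcal{Q}|) \le H(d)\,\mathrm{OPT},
\]
and using $H(d)\le 1+\log d$ delivers the claimed $(1+\log d)$-approximation. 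The only delicate part is the per-set harmonic bound: one must recognize that even though the elements $W_{1},\ldots,W_{|\mathcal{Q}|}$ may be covered by \emph{different} sets during greedy execution, the set $\mathcal{Q}$ remains available at every stage and thus supplies a universal ratio lower bound that forces the assigned prices down; everything else is arithmetic.
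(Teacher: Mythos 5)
Your proof is correct: it is the standard charging/harmonic-sum argument of Chv\'atal, including the one delicate point (that the fixed set $\mathcal{Q}$ remains an available candidate at every iteration and therefore lower-bounds the greedy's ratio, which upper-bounds the price $p(W_i)$ by $\bc{\mathcal{Q}}/(|\mathcal{Q}|-i+1)$). The paper does not prove this lemma at all --- it states it as a known result and defers entirely to the citation of Chv\'atal (1979) --- so your argument is exactly the proof the paper is implicitly relying on, and it matches the algorithm as written (maximizing $|\mathcal{P}_k^{(iter)}|/\bc{\mathcal{P}_k^{(1)}}$, i.e., residual coverage over original weight).
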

Let $\mathcal{T}_1^{(1)} = \{1, 2,\ldots,\lambda_K \}$, $\mathcal{P}^{(s)}$ be the selected packet at the $s$-th iteration, and $\mathcal{P}_k^{(s)}$ be packet $\mathcal{P}_k$ at the beginning of $s$-th iteration of Algorithm \ref{algorithm_weighted_Set_cover}. Then, for $iter = 2,\ldots$, define 
\begin{align}
\label{set_T_1}
&\mathcal{T}_1^{(iter)} = \notag \\
& \{k:  k \in \mathcal{T}_1^{(1)},  \mathcal{P}_k^{(1)} \cap \mathcal{P}^{(s)} = \emptyset  \text{ for all } 1 \leq s \leq iter -1\}, \\
&\mathcal{T}_2^{(iter)} = \mathcal{T}_1^{(1)} \setminus \mathcal{T}_1^{(iter)}.
\label{set_T_2}
\end{align}
\begin{algorithm}
\caption{Weighted-Set-Cover-Solver \cite{chvatal1979greedy}}
\label{algorithm_weighted_Set_cover}
\hspace*{\algorithmicindent} {\small \textbf{Input:} Set of subfiles $\mathcal{W}$ and set of all feasible packets $\mathscr{P}$ labeled as $\{\mathcal{P}_1^{(1)}, \mathcal{P}_2^{(1)}, \ldots, \mathcal{P}_{\lambda_K}^{(1)} \}$ where $\lambda_K = |\mathscr{P}|$.}
\begin{algorithmic}[1]
\State $\mathscr{C} = \emptyset$
\State $\mathcal{E} = \mathcal{W}$
\State $iter = 1$
\While{$\mathcal{E} \neq \emptyset$}
\State $j = \argmax_{k=1,\ldots, \lambda_K} \dfrac{|\mathcal{P}_k^{(iter)}|}{\bc{\mathcal{P}_k^{(1)}}}$ 
\State $\mathcal{P}^{(iter)} = \mathcal{P}_j^{(1)}$
\State $\mathscr{C} = \mathscr{C} \cup \{\mathcal{P}^{(iter)}\}$
\State $\mathcal{E} = \mathcal{E} \setminus \mathcal{P}^{(iter)}$
\For{$k=1,\ldots, \lambda_K$}
\State $\mathcal{P}_k^{(iter+1)} = \mathcal{P}_k^{(iter)} \setminus \mathcal{P}^{(iter)}$
\EndFor
\State $iter = iter +1$
\EndWhile
\end{algorithmic}
\hspace*{\algorithmicindent} {\small \textbf{Output:} Set of packets $\mathscr{C}$.}
\end{algorithm} 
Note that $\mathcal{T}_1^{(iter)}$ denotes the set of indices of packets that do not intersect with any selected packet before iteration $iter$, and $\mathcal{T}_2^{(iter)}$ is its complement. Further, define 
\begin{align}
\mathscr{S}_1^{(iter)} = \{\mathcal{P}_k^{(iter)}: k \in  \mathcal{T}_1^{(iter)} \}.
\label{claim:closure}
\end{align}
Now, we state some preliminary results and then, use them to prove Lemma \ref{optimal_delivery_solution_first}. 
\begin{claim}
\label{claim:feasibility}
For any $k \in \mathcal{T}_1^{(1)} = \{1, 2,\ldots,\lambda_K \}$ and for any iteration $iter$, $\mathcal{P}_k^{(iter)}$ is either an empty packet or a feasible packet.
\end{claim}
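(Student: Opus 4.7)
The plan is to prove Claim~\ref{claim:feasibility} by induction on the iteration index $iter$, leveraging the closure property of feasible packets under set difference that was highlighted earlier in Section~\ref{sec:approximation}: namely, if $\mathcal{P}_1$ and $\mathcal{P}_2$ are feasible packets (i.e., cliques of the side-information graph $\mathcal{G}_c$), then $\mathcal{P}_1 \setminus \mathcal{P}_2$ is either empty or itself a feasible packet (i.e., a clique of $\mathcal{G}_c$), since deleting vertices from a clique yields either a clique or the empty set.

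For the base case $iter = 1$, the packets $\mathcal{P}_k^{(1)}$ are by construction the elements of the input collection $\mathscr{P}$, and hence each is a feasible packet by definition. For the inductive step, I would assume that for every $k \in \mathcal{T}_1^{(1)}$ the packet $\mathcal{P}_k^{(iter)}$ is either empty or feasible, and then examine the update rule on line~10 of Algorithm~\ref{algorithm_weighted_Set_cover}, namely $\mathcal{P}_k^{(iter+1)} = \mathcal{P}_k^{(iter)} \setminus \mathcal{P}^{(iter)}$. If $\mathcal{P}_k^{(iter)}$ is empty, then $\mathcal{P}_k^{(iter+1)}$ is clearly empty. Otherwise $\mathcal{P}_k^{(iter)}$ is a feasible packet, and since $\mathcal{P}^{(iter)} = \mathcal{P}_j^{(1)}$ is selected from $\mathscr{P}$, it is also a feasible packet; the closure property therefore ensures $\mathcal{P}_k^{(iter+1)}$ is either empty or feasible, completing the induction.

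The only subtle point, which I would state carefully but is not a real obstacle, is that $\mathcal{P}^{(iter)}$ itself must be a feasible packet. This follows because the $\argmax$ on line~5 is taken over indices $k \in \{1,\ldots,\lambda_K\}$ corresponding to the \emph{original} packets $\mathcal{P}_j^{(1)} \in \mathscr{P}$, and line~6 records $\mathcal{P}^{(iter)} = \mathcal{P}_j^{(1)}$, not the reduced version $\mathcal{P}_j^{(iter)}$. Once this is noted, the induction goes through immediately, and Claim~\ref{claim:feasibility} is established. No case analysis beyond ``empty vs.\ feasible'' is needed, and the only external fact invoked is the clique-difference closure property already argued in Section~\ref{sec:approximation}.
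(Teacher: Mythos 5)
Your proof is correct and follows essentially the same route as the paper's: induction on the iteration index, with the base case that each $\mathcal{P}_k^{(1)} \in \mathscr{P}$ is feasible, the observation that each selected packet $\mathcal{P}^{(iter)} = \mathcal{P}_j^{(1)}$ is feasible, and the clique-difference closure property from Section~\ref{sec:approximation} to handle the update $\mathcal{P}_k^{(iter+1)} = \mathcal{P}_k^{(iter)} \setminus \mathcal{P}^{(iter)}$. Your explicit remark that the selected packet is the original $\mathcal{P}_j^{(1)}$ rather than the reduced version is a point the paper also relies on, so there is no substantive difference.
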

\begin{proof}
To show this, note that $\mathcal{P}_k^{(1)}$ is a feasible packet (it belongs to $\mathscr{P}$). Now assume that $\mathcal{P}_k^{(iter-1)}$ is either an empty packet or a feasible packet. Note that $\mathcal{P}^{(s)}$ is a feasible packet for any iteration $s$ because it is equal to $\mathcal{P}_j^{(1)}$ for some $j \in \mathcal{T}_1^{(1)}$ and we know that all $\mathcal{P}_j^{(1)}$ are feasible. If $\mathcal{P}_k^{(iter-1)}$ is an empty packet, it is clear that $\mathcal{P}_k^{(iter)}$ is also an empty packet. Further, if $\mathcal{P}_k^{(iter-1)}$ is a feasible packet, since $\mathcal{P}^{(iter-1)}$ is also a feasible packet, 
$\mathcal{P}_k^{(iter)} = \mathcal{P}_k^{(iter-1)} \setminus \mathcal{P}^{(iter-1)}$ is clearly either an empty packet or a feasible packet.
\end{proof}

\begin{claim}
\label{claim:no_intersection}
For any $k \in \mathcal{T}_1^{(iter)}$, $\mathcal{P}_k^{(iter)}$ does not intersect with any selected packet before $iter$-th iteration, that is, $\mathcal{P}_k^{(iter)} \cap \mathcal{P}^{(s)} = \emptyset, \quad \forall \quad  1\leq s \leq iter -1$.
\end{claim}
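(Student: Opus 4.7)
The plan is a short induction on $iter$ that unrolls the update rule on line 10 of Algorithm \ref{algorithm_weighted_Set_cover}. Concretely, I would first prove the explicit identity
\[
\mathcal{P}_k^{(iter)} \;=\; \mathcal{P}_k^{(1)} \,\setminus\, \bigcup_{s=1}^{iter-1} \mathcal{P}^{(s)},
\]
for every $k \in \mathcal{T}_1^{(1)}$ and every iteration $iter \geq 1$. Once this identity is in hand, the claim falls out immediately: for any $1 \le s \le iter-1$, the right-hand side has every element of $\mathcal{P}^{(s)}$ explicitly removed, so $\mathcal{P}_k^{(iter)} \cap \mathcal{P}^{(s)} = \emptyset$.

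For the induction itself, the base case $iter=1$ is vacuous, since the range $1 \le s \le 0$ is empty and $\mathcal{P}_k^{(1)} = \mathcal{P}_k^{(1)}$. For the inductive step, suppose the identity holds at iteration $iter$. Line 10 of Algorithm \ref{algorithm_weighted_Set_cover} states $\mathcal{P}_k^{(iter+1)} = \mathcal{P}_k^{(iter)} \setminus \mathcal{P}^{(iter)}$, so
\[
\mathcal{P}_k^{(iter+1)} \;=\; \Bigl(\mathcal{P}_k^{(1)} \setminus \bigcup_{s=1}^{iter-1}\mathcal{P}^{(s)}\Bigr) \setminus \mathcal{P}^{(iter)} \;=\; \mathcal{P}_k^{(1)} \setminus \bigcup_{s=1}^{iter}\mathcal{P}^{(s)},
\]
closing the induction.

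I would finish by observing that the hypothesis $k \in \mathcal{T}_1^{(iter)}$ is in fact not needed for the set-theoretic disjointness to hold; the identity above is valid for every $k \in \mathcal{T}_1^{(1)}$. The restriction to $\mathcal{T}_1^{(iter)}$ simply aligns this claim with how it will be used downstream (e.g., together with Claim \ref{claim:feasibility} to guarantee that the greedy selection in line 5 at subsequent iterations remains well-defined over genuine non-empty feasible packets). I do not anticipate any real obstacle; the entire argument is essentially bookkeeping on the set-difference update, with the main care point being to write the unrolling unambiguously so that the base case and inductive step align.
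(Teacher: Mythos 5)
Your proof is correct and takes essentially the same route as the paper, whose entire argument is to cite the update rule $\mathcal{P}_k^{(iter+1)} = \mathcal{P}_k^{(iter)} \setminus \mathcal{P}^{(iter)}$; your unrolled identity $\mathcal{P}_k^{(iter)} = \mathcal{P}_k^{(1)} \setminus \bigcup_{s=1}^{iter-1}\mathcal{P}^{(s)}$ just makes that one-liner explicit. Your side observation that the hypothesis $k \in \mathcal{T}_1^{(iter)}$ is not actually needed for the disjointness is also accurate.
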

\begin{proof}
This proof \black{results} from the update equation for $\mathcal{P}_k^{(iter)}$ given in 
Algorithm \ref{algorithm_weighted_Set_cover}.
\end{proof}

\begin{claim}
\label{claim:set_T_1_prop}
For any $k \in \mathcal{T}_1^{(iter)}$, $\mathcal{P}_k^{(1)} = \mathcal{P}_k^{(2)} = \ldots =  \mathcal{P}_k^{(iter)}$,
that is, packet $\mathcal{P}_k^{(1)}$ does not change over the first $iter-1$ iterations. 
\end{claim}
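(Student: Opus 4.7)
The plan is to prove this claim by a straightforward induction on the iteration index, using only the update rule in Algorithm \ref{algorithm_weighted_Set_cover} together with the defining property of membership in $\mathcal{T}_1^{(iter)}$.

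First I would fix $k \in \mathcal{T}_1^{(iter)}$ and recall the two ingredients. The update rule, from line~10 of Algorithm \ref{algorithm_weighted_Set_cover}, states that $\mathcal{P}_k^{(s+1)} = \mathcal{P}_k^{(s)} \setminus \mathcal{P}^{(s)}$ for every $s$. The definition \eqref{set_T_1} of $\mathcal{T}_1^{(iter)}$ ensures that $\mathcal{P}_k^{(1)} \cap \mathcal{P}^{(s)} = \emptyset$ for every $1 \leq s \leq iter - 1$. The goal is to deduce $\mathcal{P}_k^{(s)} = \mathcal{P}_k^{(1)}$ for all $1 \leq s \leq iter$.

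Next, I would induct on $s$. The base case $s=1$ is trivial. For the inductive step, assume $\mathcal{P}_k^{(s)} = \mathcal{P}_k^{(1)}$ for some $1 \leq s \leq iter - 1$. Then by the update rule and the inductive hypothesis,
\begin{equation*}
\mathcal{P}_k^{(s+1)} \;=\; \mathcal{P}_k^{(s)} \setminus \mathcal{P}^{(s)} \;=\; \mathcal{P}_k^{(1)} \setminus \mathcal{P}^{(s)} \;=\; \mathcal{P}_k^{(1)},
\end{equation*}
where the last equality uses the disjointness $\mathcal{P}_k^{(1)} \cap \mathcal{P}^{(s)} = \emptyset$ guaranteed by $k \in \mathcal{T}_1^{(iter)}$ and $s \leq iter - 1$. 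This closes the induction and gives the desired chain of equalities.

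There is essentially no obstacle here; the claim is a direct bookkeeping consequence of the update rule and the way $\mathcal{T}_1^{(iter)}$ is defined. The only thing to be careful about is making sure the range of $s$ in the induction matches the range guaranteed by the disjointness condition in \eqref{set_T_1}, so that each step of the induction is justified by an applicable disjointness statement.
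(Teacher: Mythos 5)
Your proof is correct and follows exactly the same route as the paper's: both rely on the disjointness $\mathcal{P}_k^{(1)} \cap \mathcal{P}^{(s)} = \emptyset$ guaranteed by the definition of $\mathcal{T}_1^{(iter)}$ in \eqref{set_T_1} together with the update rule of Algorithm \ref{algorithm_weighted_Set_cover}. You merely make explicit the short induction that the paper leaves implicit.
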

\begin{proof}
This proof \black{results} from the definition of set $\mathcal{T}_1^{(iter)}$ in \eqref{set_T_1}. Since $k \in \mathcal{T}_1^{(iter)}$, $\mathcal{P}_k^{(1)}$ does not intersect with any $\mathcal{P}^{(s)}$ for $s \leq iter-1$. Then, from the update equation for $\mathcal{P}_k^{(iter)}$ given in Algorithm \ref{algorithm_weighted_Set_cover}, the correctness of Claim \ref{claim:set_T_1_prop} \black{can be} concluded.
\end{proof}
\begin{claim}
\label{prop:sizes}
Let $\mathcal{P}_1$ and $\mathcal{P}_2$ be feasible packets and let $\mathcal{P}_3 = \mathcal{P}_1 \setminus \mathcal{P}_2$. Then, $\bc{\mathcal{P}_3} \leq \bc{\mathcal{P}_1}$.
\end{claim}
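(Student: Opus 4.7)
The statement to be established is a straightforward monotonicity fact about the size function $\bc{\cdot}$, so the plan is short. I would begin by recalling the definition introduced in equation \eqref{bc_definition}: for any packet $\mathcal{P}$, $\bc{\mathcal{P}} = \max_{W \in \mathcal{P}} \bc{W}$. This is the only property of $\bc{\cdot}$ the argument needs; feasibility of $\mathcal{P}_1, \mathcal{P}_2, \mathcal{P}_3$ is not actually used for the size comparison itself.

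Next I would observe the elementary set-theoretic inclusion $\mathcal{P}_3 = \mathcal{P}_1 \setminus \mathcal{P}_2 \subseteq \mathcal{P}_1$. Since taking the maximum of a nonnegative function over a subset cannot exceed the maximum over the superset, this immediately yields
\begin{equation*}
\bc{\mathcal{P}_3} \;=\; \max_{W \in \mathcal{P}_3} \bc{W} \;\leq\; \max_{W \in \mathcal{P}_1} \bc{W} \;=\; \bc{\mathcal{P}_1}.
\end{equation*}

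The only edge case to address is when $\mathcal{P}_3 = \emptyset$, i.e., when $\mathcal{P}_1 \subseteq \mathcal{P}_2$. In that case $\bc{\mathcal{P}_3} = 0$ by the natural convention for the max over an empty set, and since $\bc{\mathcal{P}_1} \geq 0$ (sizes are nonnegative), the inequality still holds. There is no main obstacle here: the entire proof is one inclusion followed by monotonicity of the max operator.
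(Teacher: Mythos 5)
Your proof is correct and follows essentially the same route as the paper, which likewise derives the claim from the definition of $\bc{\cdot}$ in \eqref{bc_definition} together with the inclusion $\mathcal{P}_3 \subseteq \mathcal{P}_1$. Your explicit handling of the empty-packet edge case is a minor addition the paper omits but does not change the argument.
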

\begin{proof}
The proof \black{results} from the definition $\bc{.}$ in \eqref{bc_definition} and the fact that $\mathcal{P}_3 \subseteq \mathcal{P}_1$.
\end{proof}
Now, we show that $\mathcal{P}^{(iter)}$ selected at the $iter$-th iteration of Algorithm \ref{algorithm_weighted_Set_cover} is the same as $\mathcal{P}$ chosen at the $iter$-th iteration of  Algorithm \ref{first_delivery}. To this end, we show that 
\begin{align}
\max_{k \in \mathcal{T}_1^{(1)}} \dfrac{|\mathcal{P}_k^{(iter)}|}{\bc{\mathcal{P}_k^{(1)}}} 
= \max_{k \in \mathcal{T}_1^{(iter)}} \dfrac{|\mathcal{P}_k^{(iter)}|}{\bc{\mathcal{P}_k^{(1)}}},
\label{claim:proof_1}
\end{align}
that is, in order to find maximizer index $k \in \mathcal{T}_1^{(1)}$, we only need to focus on $k$'s belonging to set $\mathcal{T}_1^{(iter)}$. To show this, choose any \black{$k \in \mathcal{T}_1^{(iter)}$}. Then, from Claim \ref{claim:feasibility}, $\mathcal{P}_k^{(iter)}$ is either an empty packet or a feasible packet. 
\begin{itemize}
\item If $\mathcal{P}_k^{(iter)}$ is an empty packet, we have $|\mathcal{P}_k^{(iter)}| = 0$ and hence index $k$ is never chosen as long as there exists some packet $\mathcal{P}_{k'}^{(iter)}$ with $|\mathcal{P}_{k'}^{(iter)}| > 0$. Therefore, we can ignore $k$ in the optimization of \eqref{claim:proof_1}. Further, if $|\mathcal{P}_{k'}^{(iter)}| = 0$ for all $k' \in \mathcal{T}_1^{(1)}$, it can be easily observed that $\mathcal{E}=\emptyset$ at the beginning $iter$-th iteration, and hence, we do not need to solve \eqref{claim:proof_1}. 

\item If $\mathcal{P}_k^{(iter)}$ is a feasible packet, then there should be some $l_k \in \mathcal{T}_1^{(1)}$ such that $\mathcal{P}_{l_k}^{(1)} = \mathcal{P}_k^{(iter)}
$ because $\mathcal{T}_1^{(1)}$ contains the indices of all feasible packets. Then from Claim \ref{claim:no_intersection}, $\mathcal{P}_{l_k}^{(1)}$ does not intersect with any selected packet before $iter$-th iteration and hence, by definition of $\mathcal{T}_1^{(iter)}$ in \eqref{set_T_1}, we have $l_k \in \mathcal{T}_1^{(iter)}$. Then, from Claim \ref{claim:set_T_1_prop}, we have 
\begin{align}
\label{equality_set_T_1_proof}
\mathcal{P}_{l_k}^{(1)} = \mathcal{P}_{l_k}^{(2)} = \ldots =  \mathcal{P}_{l_k}^{(iter)}.
\end{align}
Furthermore, from Claim \ref{prop:sizes}, we have $\bc{\mathcal{P}_k^{(iter)}} \leq \bc{\mathcal{P}_k^{(1)}}$ and since $\mathcal{P}_{l_k}^{(1)} = \mathcal{P}_k^{(iter)}
$, we have $\bc{\mathcal{P}_{l_k}^{(1)}} \leq \bc{\mathcal{P}_k^{(iter)}}$. This means that 
\begin{align}
\dfrac{|\mathcal{P}_{l_k}^{(1)}|}{\bc{\mathcal{P}_{l_k}^{(1)}}} \geq \dfrac{|\mathcal{P}_{k}^{(iter)}|}{\bc{\mathcal{P}_{k}^{(1)}}} \Longrightarrow 
\dfrac{|\mathcal{P}_{l_k}^{(iter)}|}{\bc{\mathcal{P}_{l_k}^{(1)}}} \geq \dfrac{|\mathcal{P}_{k}^{(iter)}|}{\bc{\mathcal{P}_{k}^{(1)}}}
\label{claim:proof2}
\end{align}
where \eqref{claim:proof2} is correct because of \eqref{equality_set_T_1_proof}. Therefore, we can ignore $k$ in the optimization of \eqref{claim:proof_1} in favor of $l_k$.
\end{itemize}
Consequently, in order to find the $k$ maximizing $\frac{|\mathcal{P}_k^{(iter)}|}{\bc{\mathcal{P}_k^{(1)}}}$, we only need to focus on set $\mathcal{T}_1^{(iter)}$ and hence, \eqref{claim:proof_1} is correct. Further, note that from \eqref{equality_set_T_1_proof} and \eqref{claim:proof_1}, we have 
 \begin{align}
\max_{k \in \mathcal{T}_1^{(1)}} \dfrac{|\mathcal{P}_k^{(iter)}|}{\bc{\mathcal{P}_k^{(1)}}} 
&= \max_{k \in \mathcal{T}_1^{(iter)}} \dfrac{|\mathcal{P}_k^{(iter)}|}{\bc{\mathcal{P}_k^{(1)}}}
= \max_{k \in \mathcal{T}_1^{(iter)}} \dfrac{|\mathcal{P}_k^{(1)}|}{\bc{\mathcal{P}_k^{(1)}}} \notag \\
&
= \max_{\mathcal{P} \in \mathscr{S}_1^{(iter)}} \dfrac{|\mathcal{P}|}{\bc{\mathcal{P}}},
\label{claim:proof_3}
\end{align}
where the last equality follows from the definition of $\mathscr{S}_1^{(iter)}$ in  \eqref{claim:closure}. Now, from the definition of function $\textsc{PBO}$ and lines 5 and 6 of Algorithm \ref{algorithm_weighted_Set_cover}, \eqref{claim:proof_3} means that $\mathcal{P}^{(iter)} = \textsc{PBO}(\mathscr{S}^{(iter)}_1)$. Note that $\mathscr{S}^{(iter)}_1$ is exactly \black{the} set $\mathscr{S}$ at the $iter$-th iteration of Algorithm \ref{first_delivery}. Therefore, $\mathcal{P}^{(iter)}$ and $\mathcal{P}$ are the same at the $iter$-th iteration of Algorithm \ref{algorithm_weighted_Set_cover} and Algorithm \ref{first_delivery}. Hence, these two algorithms are equivalent.
Since Algorithm \ref{algorithm_weighted_Set_cover} provides a $(1+ \log d)$-approximation to minimum weighted set cover problem, and further this problem and Problems  \ref{optimal_delivery_N} are equivalent, Algorithm \ref{first_delivery} provides a $(1+ \log d)$-approximation to Problems  \ref{optimal_delivery_N}. Furthermore, note that in Problem \ref{optimal_delivery_N}, we have $|\mathcal{P}|\leq K$ for all $\mathcal{P} \in \mathscr{P}$. Therefore, Algorithm \ref{algorithm_weighted_Set_cover} provides a $(1+ \log K)$-approximation to Problem \ref{optimal_delivery_N}.

\section{Proof of Theorem \ref{equivalent_function}}
\label{proof:optimal_delivery_solution_second}
Let $\mathcal{E}$ be any subset of $\mathcal{W}$ and let $\mathscr{S}$ denote the set of all possible feasible packets that can be generated from $\mathcal{E}$. We define $\mathcal{M} \subseteq [K]$ as the indices of users for which there exists a subfile in $\mathcal{E}$ that needs to be sent. More specifically, 
\begin{align}
\label{M_set}
\mathcal{M} = \{k: k \in [K], \exists W_{k, \mathcal{A}} \in \mathcal{E} \text{ for some } \mathcal{A} \subseteq [K] \setminus \{k\} \}.
\end{align}
Let $\mathscr{T}$ be set of all non-empty subsets of $\mathcal{M}$, that is, $\mathscr{T} = \{\mathcal{T} \subseteq \mathcal{M}: |\mathcal{T}| \neq 0 \}$. Then, we define for $\mathcal{T} \in \mathscr{T}$,
\begin{align}
\label{space_decomposition}
\mathscr{Q}_{\mathcal{T}} = \{\mathcal{P}_{\mathcal{A}_{1:K}}:  \mathcal{P}_{\mathcal{A}_{1:K}} \in \mathscr{S}, \forall j \in \mathcal{T} \hspace{1mm} \mathcal{A}_j \neq \aleph, \forall j \not \in \mathcal{T} \hspace{1mm} \mathcal{A}_j = \aleph
\}.
\end{align}

\begin{claim}
\label{disjoint_decomposition}
Let $\mathcal{E}$ be any subset of $\mathcal{W}$ and let $\mathscr{S}$ denote the set of all possible feasible packets that can be generated from $\mathcal{E}$. Define $\mathcal{M}$ using \eqref{M_set} and $\mathscr{T} = \{\mathcal{T} \subseteq \mathcal{M}: |\mathcal{T}| \neq 0 \}$. Furthermore, for any $\mathcal{T} \in \mathscr{T}$, define $ \mathscr{Q}_{\mathcal{T}}$ using \eqref{space_decomposition}. Then the sets $ \mathscr{Q}_{\mathcal{T}}$, $\mathcal{T} \in \mathscr{T}$, are disjoint and $\cup_{\mathcal{T} \in \mathscr{T}}  \mathscr{Q}_{\mathcal{T}} = \mathscr{S}$.
\end{claim}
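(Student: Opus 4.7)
The plan is to verify both parts of the claim via a direct bookkeeping argument based on the parametrization of cliques as $\mathcal{P}_{\mathcal{A}_{1:K}}$. The key observation I would use at the outset is that for any $\mathcal{P}_{\mathcal{A}_{1:K}} \in \mathscr{S}$ the tuple $\mathcal{A}_{1:K}$ is an intrinsic label of the packet, so one can unambiguously extract the \emph{support} $\mathcal{T}^\star(\mathcal{P}) := \{ j \in [K] : \mathcal{A}_j \neq \aleph \}$, i.e.\ the indices of users whose requested subfile appears in $\mathcal{P}$.

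\textbf{Disjointness.} First I would argue that the family $\{\mathscr{Q}_{\mathcal{T}}\}_{\mathcal{T} \in \mathscr{T}}$ is pairwise disjoint. By definition~\eqref{space_decomposition}, membership of $\mathcal{P}_{\mathcal{A}_{1:K}}$ in $\mathscr{Q}_{\mathcal{T}}$ forces $\mathcal{A}_j \neq \aleph$ for every $j \in \mathcal{T}$ and $\mathcal{A}_j = \aleph$ for every $j \notin \mathcal{T}$; equivalently, $\mathcal{T} = \mathcal{T}^\star(\mathcal{P})$. Hence if the same packet belonged to $\mathscr{Q}_{\mathcal{T}_1}$ and $\mathscr{Q}_{\mathcal{T}_2}$, then $\mathcal{T}_1 = \mathcal{T}^\star(\mathcal{P}) = \mathcal{T}_2$, proving disjointness.

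\textbf{Union.} Next I would show $\bigcup_{\mathcal{T}\in\mathscr{T}} \mathscr{Q}_{\mathcal{T}} = \mathscr{S}$. The inclusion $\bigcup_{\mathcal{T}\in\mathscr{T}} \mathscr{Q}_{\mathcal{T}} \subseteq \mathscr{S}$ is immediate from~\eqref{space_decomposition}. For the reverse inclusion, pick any $\mathcal{P}_{\mathcal{A}_{1:K}} \in \mathscr{S}$ and set $\mathcal{T} := \mathcal{T}^\star(\mathcal{P})$. I would then verify that $\mathcal{T} \in \mathscr{T}$: (i) $\mathcal{T}$ is nonempty because a feasible packet/clique contains at least one subfile, so at least one coordinate $\mathcal{A}_j$ is not $\aleph$; and (ii) $\mathcal{T} \subseteq \mathcal{M}$ because for each $j \in \mathcal{T}$ the packet contains the subfile $W_{j,\mathcal{A}_j}$, which must lie in $\mathcal{E}$ (since $\mathscr{S}$ is the set of cliques generated from $\mathcal{E}$), and hence $j \in \mathcal{M}$ by definition~\eqref{M_set}. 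By construction $\mathcal{A}_j \neq \aleph$ iff $j \in \mathcal{T}$, so $\mathcal{P}_{\mathcal{A}_{1:K}} \in \mathscr{Q}_{\mathcal{T}}$ as required.

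\textbf{Main obstacle.} This claim is a routine bookkeeping exercise once the convention $\mathcal{A}_j=\aleph \iff j$ contributes no subfile to $\mathcal{P}$ is taken seriously; I do not anticipate a technical obstacle. The only subtlety worth writing out carefully is the argument that each packet in $\mathscr{S}$ has a well-defined and nonempty support $\mathcal{T}^\star(\mathcal{P})$ contained in $\mathcal{M}$, since this is what guarantees both that the family $\{\mathscr{Q}_{\mathcal{T}}\}$ is indexed by an object strictly determined by the packet (giving disjointness) and that the index indeed lies in $\mathscr{T}$ (giving the union).
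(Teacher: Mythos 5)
Your proof is correct and takes essentially the same route as the paper, which simply asserts that the claim "is easily concluded from the definition of $\mathscr{Q}_{\mathcal{T}}$"; you have merely written out the definitional bookkeeping (the support $\mathcal{T}^\star(\mathcal{P})$ is determined by the packet, is nonempty, and lies in $\mathcal{M}$) that the paper leaves implicit.
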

\begin{proof}
The proof is easily concluded from the definition of set $\mathscr{Q}_{\mathcal{T}}$ in \eqref{space_decomposition}.
\end{proof}

For any set $\mathcal{T} \in \mathscr{T}$ and any $j \in \mathcal{T}$, we define 
\begin{align}
\label{Gamma_T}
\mathcal{L}_{j,\mathcal{T}} = \{W_{j, \mathcal{A}}: W_{j, \mathcal{A}} \in \mathcal{E}, \mathcal{T} \setminus \{j\}  \subseteq \mathcal{A} \}.
\end{align}
Further, we define $\mathcal{L}_{j,\mathcal{T}} = \{W_*\}$ with $\bc{W_*} = \infty$ whenever $\mathcal{L}_{j,\mathcal{T}}$ is empty from definition above.
Note that $\mathcal{L}_{j,\mathcal{T}}$ includes subfiles in $\mathcal{E}$ needed to be sent to user $j$ which are available in the cache of all users of set $\mathcal{T} \setminus \{j\}$. If $\mathcal{T} \setminus \{j\} = \varnothing$, then $\mathcal{L}_{j,\mathcal{T}}$ includes all subfiles in $\mathcal{E}$ requested by user $j$. 

\begin{claim}
\label{equivalent_selection}
Given $\mathcal{T} \in \mathscr{T}$, let $\mathscr{Q}_{\mathcal{T}} $ be defined as in \eqref{space_decomposition} and $\mathcal{L}_{j,\mathcal{T}}$ for all $j \in \mathcal{T}$ be as defined in \eqref{Gamma_T}. 
\begin{enumerate}
\item If $\mathcal{P}_{\mathcal{A}_{1:K}}$ is a feasible packet in $\mathscr{Q}_{\mathcal{T}} $, then for each $j \in \mathcal{T}$, there is exactly one subfile $W_{j, \mathcal{A}_j} \in\mathcal{P}_{\mathcal{A}_{1:K}}$ that $W_{j, \mathcal{A}_j} \in \mathcal{L}_{j,\mathcal{T}}$.
\item If $\mathcal{P}_{\mathcal{A}_{1:K}}$ is a collection of subfiles $W_{j, \mathcal{A}_j} $, \black{one for each} $j \in \mathcal{T}$ where subfile $W_{j, \mathcal{A}_j}  \in \mathcal{L}_{j,\mathcal{T}}$, then, $\mathcal{P}_{\mathcal{A}_{1:K}}$ is a feasible packet in $\mathscr{Q}_{\mathcal{T}} $.
\end{enumerate}
\end{claim}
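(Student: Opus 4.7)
The plan is to prove both directions of the claim by directly unpacking three pieces of structure we already have at hand: the membership condition for $\mathscr{Q}_{\mathcal{T}}$ in \eqref{space_decomposition}, the membership condition for $\mathcal{L}_{j,\mathcal{T}}$ in \eqref{Gamma_T}, and the clique/decodability condition for a feasible packet recorded just before Property~\ref{delivery_property} and in Remark~\ref{rem:side_information}, namely that for a feasible packet written as $\{W_{k,\mathcal{A}_k} : k \in \mathcal{M}\}$ one must have $k \in \mathcal{A}_{k'}$ for every pair of distinct $k, k' \in \mathcal{M}$.

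For the forward direction (Part~1), I would fix an arbitrary feasible packet $\mathcal{P}_{\mathcal{A}_{1:K}} \in \mathscr{Q}_{\mathcal{T}}$. By definition of $\mathscr{Q}_{\mathcal{T}}$, the non-$\aleph$ support of the index tuple is exactly $\mathcal{T}$, so the packet contains exactly one subfile $W_{j,\mathcal{A}_j}$ for each $j \in \mathcal{T}$ and none for users outside $\mathcal{T}$. To check $W_{j,\mathcal{A}_j} \in \mathcal{L}_{j,\mathcal{T}}$ I would verify the two requirements of \eqref{Gamma_T}: membership in $\mathcal{E}$ is immediate since every packet in $\mathscr{S}$ is assembled from subfiles of $\mathcal{E}$; the inclusion $\mathcal{T}\setminus\{j\}\subseteq\mathcal{A}_j$ follows by applying the clique condition to each pair $(j,k)$ with $k\in\mathcal{T}\setminus\{j\}$, which forces $k\in\mathcal{A}_j$.

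For the reverse direction (Part~2), I would take the collection $\{W_{j,\mathcal{A}_j} : j\in\mathcal{T}\}$ with $W_{j,\mathcal{A}_j}\in\mathcal{L}_{j,\mathcal{T}}$, extend the index tuple by setting $\mathcal{A}_k = \aleph$ for $k\notin\mathcal{T}$, and verify two things. The support pattern required by \eqref{space_decomposition} holds by construction, so the only nontrivial step is feasibility: for any distinct $j,k\in\mathcal{T}$, the hypothesis $W_{k,\mathcal{A}_k}\in\mathcal{L}_{k,\mathcal{T}}$ gives $\mathcal{T}\setminus\{k\}\subseteq\mathcal{A}_k$, hence $j\in\mathcal{A}_k$. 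This shows that user $k$ has all the other subfiles of the packet in its cache and can therefore recover $W_{k,\mathcal{A}_k}$ from the XOR, which is precisely what Property~\ref{delivery_property} demands.

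Since the argument is essentially a dictionary translation between the two representations, I do not expect a serious obstacle. The only subtleties worth flagging are (i) the degenerate case $|\mathcal{T}|=1$, where the clique condition in Part~1 and the feasibility check in Part~2 are vacuous because a single-subfile packet is trivially feasible and $\mathcal{T}\setminus\{j\}=\varnothing\subseteq\mathcal{A}_j$ automatically; and (ii) a mild bookkeeping point that in Part~2 the hypothesis $W_{j,\mathcal{A}_j}\in\mathcal{L}_{j,\mathcal{T}}$ must refer to an actual subfile of $\mathcal{E}$ rather than the placeholder $W_*$ introduced in \eqref{Gamma_T} to keep $\mathcal{L}_{j,\mathcal{T}}$ nonempty by convention, which is clearly the intended reading since a placeholder cannot be transmitted.
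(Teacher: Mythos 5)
Your proposal is correct and follows essentially the same route as the paper's proof: both directions are handled by directly unpacking the definitions of $\mathscr{Q}_{\mathcal{T}}$ and $\mathcal{L}_{j,\mathcal{T}}$ and matching the condition $\mathcal{T}\setminus\{j\}\subseteq\mathcal{A}_j$ against the pairwise clique/decodability requirement for feasible packets. Your two flagged subtleties (the vacuous $|\mathcal{T}|=1$ case and excluding the placeholder $W_*$) are sensible housekeeping that the paper leaves implicit, but they do not change the argument.
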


\begin{proof}
We first prove part 1. According to \eqref{space_decomposition}, for any packet $\mathcal{P}_{\mathcal{A}_{1:K}} \in \mathscr{Q}_{\mathcal{T}}$, we have $\mathcal{A}_j \neq \aleph$ for all $j \in \mathcal{T}$ and  $\mathcal{A}_j = \aleph$ for all $j \not \in \mathcal{T}$. For each $j \in \mathcal{T}$, since $\mathcal{A}_j \neq \aleph$, there should be a subfile $W_{j,\mathcal{A}_j} \in \mathcal{P}_{\mathcal{A}_{1:K}}$. Since $\mathcal{P}_{\mathcal{A}_{1:K}} \in \mathscr{Q}_{\mathcal{T}}$, we have $W_{j,\mathcal{A}_j} \in \mathcal{E}$. Furthermore, since $\mathcal{P}_{\mathcal{A}_{1:K}}$ is a feasible packet, we should have
$\mathcal{T} \setminus \{j\} \subseteq \mathcal{A}_j$ for all $j \in \mathcal{T}$. Hence, $W_{j,\mathcal{A}_j} \in \mathcal{L}_{j,\mathcal{T}} $ for all $j \in \mathcal{T}$. 

To prove part 2, let $\mathcal{P}_{\mathcal{A}_{1:K}} = \{W_{j,\mathcal{A}_j}: j \in \mathcal{T} \}$. Since for each $j \in \mathcal{T}$, $W_{j,\mathcal{A}_j} \in \mathcal{L}_{j,\mathcal{T}} $, by definition of $\mathcal{L}_{j,\mathcal{T}} $ in \eqref{Gamma_T}, we know that $\mathcal{T} \setminus \{j\} \subseteq \mathcal{A}_j$. This is true for all $j \in \mathcal{T}$, hence $\mathcal{P}_{\mathcal{A}_{1:K}} = \{W_{j, \mathcal{A}_j}: j \in \mathcal{T}  \}$ satisfies the definition of feasible packets. Therefore, $\mathcal{P}_{\mathcal{A}_{1:K}}$ is a feasible packet and since $\mathcal{A}_j \neq \aleph$ for all $j \in \mathcal{T}$ and  $\mathcal{A}_j = \aleph$ for all $j \not \in \mathcal{T}$, $\mathcal{P}_{\mathcal{A}_{1:K}} \in \mathscr{Q}_{\mathcal{T}}$.
\end{proof}

Claim \ref{equivalent_selection} suggests that finding packet $\mathcal{P}_{\mathcal{A}_{1:K}} \in \mathscr{Q}_{\mathcal{T}}$ with the minimum size is equivalent to finding a collection of subfiles $V_{j, \mathcal{T}}$, $j \in \mathcal{T}$ where $V_{j, \mathcal{T}}$ has the minimum size among the subfiles of set $\mathcal{L}_{j,\mathcal{T}}$. The following Lemma states this result.
\begin{claim}
\label{decompisition_optimization}
Let $\mathcal{E}$ be any subset of $\mathcal{W}$ and let $\mathcal{T} \in \mathscr{T}$. Define $\mathscr{Q}_{\mathcal{T}}$ from \eqref{space_decomposition} and for all $j \in \mathcal{T}$, define $\mathcal{L}_{j,\mathcal{T}}$ from \eqref{Gamma_T}.
Then, if $V_{j,\mathcal{T}} = \argmin_{W \in \mathcal{L}_{j,\mathcal{T}}} \bc{W}$ \black{and $\mathcal{R}_{\mathcal{T}} = \{V_{j, \mathcal{T}}: j \in \mathcal{T}\}$},we have
$\min_{\mathcal{P} \in \mathscr{Q}_{\mathcal{T}}} \bc{\mathcal{P}} = \black{\bc{\mathcal{R}_{\mathcal{T}}}}$
where we define $\mathcal{L}_{j,\mathcal{T}} = \{W_*\}$ with $\bc{W_*} = \infty$ whenever $\mathcal{L}_{j,\mathcal{T}} $ is empty by definition of \eqref{Gamma_T}.
\end{claim}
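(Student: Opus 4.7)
The plan is to establish the claimed equality $\min_{\mathcal{P} \in \mathscr{Q}_{\mathcal{T}}} \bc{\mathcal{P}} = \bc{\mathcal{R}_{\mathcal{T}}}$ by proving the two directions separately, in both cases leveraging Claim~\ref{equivalent_selection} (\texttt{equivalent\_selection}) as the workhorse. First I would dispose of the nondegenerate case in which $\mathcal{L}_{j,\mathcal{T}} \neq \emptyset$ for every $j \in \mathcal{T}$, and then handle the degenerate case separately using the convention $\bc{W_*} = \infty$.

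For the upper bound $\min_{\mathcal{P} \in \mathscr{Q}_{\mathcal{T}}} \bc{\mathcal{P}} \leq \bc{\mathcal{R}_{\mathcal{T}}}$, I would invoke part~2 of Claim~\ref{equivalent_selection} with the collection $\{V_{j,\mathcal{T}} : j \in \mathcal{T}\}$. Since each $V_{j,\mathcal{T}}$ lies in $\mathcal{L}_{j,\mathcal{T}}$ by its definition as an $\argmin$, that part of the claim certifies that $\mathcal{R}_{\mathcal{T}}$ is itself a feasible packet in $\mathscr{Q}_{\mathcal{T}}$, so the minimum is attained by a packet with size at most $\bc{\mathcal{R}_{\mathcal{T}}}$.

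For the matching lower bound, I would take an arbitrary $\mathcal{P}_{\mathcal{A}_{1:K}} \in \mathscr{Q}_{\mathcal{T}}$ and apply part~1 of Claim~\ref{equivalent_selection}. This yields, for every $j \in \mathcal{T}$, exactly one subfile $W_{j,\mathcal{A}_j} \in \mathcal{P}_{\mathcal{A}_{1:K}}$ belonging to $\mathcal{L}_{j,\mathcal{T}}$. By the $\argmin$ definition of $V_{j,\mathcal{T}}$, we have $\bc{V_{j,\mathcal{T}}} \leq \bc{W_{j,\mathcal{A}_j}}$ for each such $j$. Taking the maximum over $j \in \mathcal{T}$ on both sides and appealing to the definition \eqref{bc_definition} of $\bc{\cdot}$ as a maximum over a packet's subfiles gives $\bc{\mathcal{R}_{\mathcal{T}}} \leq \bc{\mathcal{P}_{\mathcal{A}_{1:K}}}$. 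Since $\mathcal{P}_{\mathcal{A}_{1:K}}$ was arbitrary, this yields $\bc{\mathcal{R}_{\mathcal{T}}} \leq \min_{\mathcal{P} \in \mathscr{Q}_{\mathcal{T}}} \bc{\mathcal{P}}$, and combining with the upper bound closes the equality.

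Finally, for the degenerate case in which some $\mathcal{L}_{j_0,\mathcal{T}} = \emptyset$, I would argue that $\mathscr{Q}_{\mathcal{T}}$ itself must be empty: otherwise part~1 of Claim~\ref{equivalent_selection} would demand a subfile in every $\mathcal{L}_{j,\mathcal{T}}$, contradicting the emptiness of $\mathcal{L}_{j_0,\mathcal{T}}$. With $\mathscr{Q}_{\mathcal{T}} = \emptyset$ the left-hand side is vacuously $+\infty$, and the right-hand side is also $+\infty$ because the convention $V_{j_0,\mathcal{T}} = W_*$ with $\bc{W_*} = \infty$ forces $\bc{\mathcal{R}_{\mathcal{T}}} = \infty$ through \eqref{bc_definition}. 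No real obstacle arises in this proof since Claim~\ref{equivalent_selection} already carries the structural content; the only mild bookkeeping point is to be explicit about the empty $\mathcal{L}_{j,\mathcal{T}}$ convention so that the equality still makes sense in the degenerate case.
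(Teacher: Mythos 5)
Your proof is correct and follows essentially the same route as the paper's: both directions are obtained by applying parts 1 and 2 of Claim~\ref{equivalent_selection} exactly as you describe, with the lower bound coming from $\bc{V_{j,\mathcal{T}}} \leq \bc{W_{j,\mathcal{A}_j}}$ and the upper bound from feasibility of $\mathcal{R}_{\mathcal{T}}$. Your explicit treatment of the degenerate case where some $\mathcal{L}_{j_0,\mathcal{T}}$ is empty is a small bonus the paper leaves implicit, but it does not change the argument.
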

\begin{proof}
{\color{black} 
From part 1 of Claim \ref{equivalent_selection}, we know that each feasible packet $\mathcal{P} \in \mathscr{Q}_{\mathcal{T}}$ includes exactly one subfile $W_{j, \mathcal{A}_j} \in \mathcal{L}_{j,\mathcal{T}}$ for each $j \in \mathcal{T}$. Consider a feasible packet $\mathcal{P} \in \mathscr{Q}_{\mathcal{T}}$, then we can write $\bc{\mathcal{P}}$ as $\bc{\mathcal{P}} = \max_{j \in \mathcal{T}} \bc{W_{j,A_j}}$.
Since $V_{j,\mathcal{T}}$ is the subfile with the smallest size in $\mathcal{L}_{j,\mathcal{T}}$, we have $\bc{W_{j,A_j}} \geq \bc{V_{j,\mathcal{T}}}$ for each $j \in \mathcal{T}$. 
Consequently, $\bc{\mathcal{P}} \geq \max_{j \in \mathcal{T}} \bc{V_{j,\mathcal{T}}}$ for any feasible packet $\mathcal{P} \in \mathscr{Q}_{\mathcal{T}}$, which immediately implies that $\min_{\mathcal{P} \in \mathscr{Q}_{\mathcal{T}}} \bc{\mathcal{P}} \geq \max_{j \in \mathcal{T}} \bc{V_{j,\mathcal{T}}}$.
Now, from part 2 of Claim \ref{equivalent_selection}, we know that $\mathcal{R}_{\mathcal{T}} = \{V_{j, \mathcal{T}}: j \in \mathcal{T}\}$ is indeed a feasible packet. This means that $\bc{\mathcal{R}_{\mathcal{T}}} \geq \min_{\mathcal{P} \in \mathscr{Q}_{\mathcal{T}}} \bc{\mathcal{P}}$.
Furthermore, $\max_{j \in \mathcal{T}} \bc{V_{j,\mathcal{T}}} = \bc{\mathcal{R}_{\mathcal{T}}}$ from the definition of the size of a feasible packet. Then, we have
$\bc{\mathcal{R}_{\mathcal{T}}} \geq \min_{\mathcal{P} \in \mathscr{Q}_{\mathcal{T}}} \bc{\mathcal{P}} \geq \max_{j \in \mathcal{T}} \bc{V_{j,\mathcal{T}}} = \bc{\mathcal{R}_{\mathcal{T}}}$. Hence, the above inequalities must hold with equality, and the proof of the claim is complete.
}

\end{proof}
Now we use the above results to prove Theorem \ref{equivalent_function} as follows,
\begin{align}
&
 \textsc{PBO}(\mathscr{S}) =\max_{\mathcal{P}_{\mathcal{A}_{1:K}} \in \mathscr{S}} \dfrac{|\mathcal{P}_{\mathcal{A}_{1:K}}|}{\bc{\mathcal{P}_{\mathcal{A}_{1:K}}}} 
= \max_{\mathcal{T} \in \mathscr{T}} \max_{\mathcal{P}_{\mathcal{A}_{1:K}} \in \mathscr{Q}_{\mathcal{T}}} \dfrac{|\mathcal{P}_{\mathcal{A}_{1:K}}|}{\bc{\mathcal{P}_{\mathcal{A}_{1:K}}}} 
\notag \\
&
= \max_{\mathcal{T} \in \mathscr{T}} \max_{\mathcal{P}_{\mathcal{A}_{1:K}} \in \mathscr{Q}_{\mathcal{T}}} \dfrac{|\mathcal{T}|}{\bc{\mathcal{P}_{\mathcal{A}_{1:K}}}}  
= \max_{\mathcal{T} \in \mathscr{T}} \dfrac{|\mathcal{T}|}{\min_{\mathcal{P}_{\mathcal{A}_{1:K}} \in \mathscr{Q}_{\mathcal{T}}} \bc{\mathcal{P}_{\mathcal{A}_{1:K}}}} 
\notag \\
&
= \max_{\mathcal{T} \in \mathscr{T}} \dfrac{|\mathcal{T}|}{\bc{\mathcal{R}_{\mathcal{T}}}} =  \textsc{SBO}(\mathcal{E}),
\end{align}
where the first equality is true by definition of function $ \textsc{PBO}$, the second equality is true because of Claim \ref{disjoint_decomposition}, the third equality is true because for all $\mathcal{P}_{\mathcal{A}_{1:K}} \in \mathscr{Q}_{\mathcal{T}}$, $|\mathcal{P}_{\mathcal{A}_{1:K}}| = |\mathcal{T}|$, the fifth equality is true by Claim \ref{decompisition_optimization}, and the last equality is true by definition of function $ \textsc{SBO}$.

 \section{Proof of Theorem \ref{optimal_delivery_solution_second}}
\label{proof:optimal_delivery} 
First note that lines 8-10 of Algorithm \ref{first_delivery} guarantee that  $\mathscr{S}$ at end of each while loop is the set of all possible feasible packets that can be generated from $\mathcal{E}$ of line 7. Therefore from Theorem \ref{equivalent_function}, the size of the packet obtained from function \textsc{PBO} is the same as the size of the packet obtained from function \textsc{SBO}. Hence, we can replace function \textsc{PBO} with function \textsc{SBO} and this not change the optimality order of Algorithm \ref{first_delivery}. In other words, Algorithm \ref{my_delivery} provides a $(1 + \log K)$-approximation to Problem \ref{optimal_delivery_N}. 
\section{Proof of Theorem \ref{thm:perfromance_SID}}
\label{proof:thm:perfromance_SID}
We first prove part 1 of Theorem \ref{thm:perfromance_SID} by providing an instance of Problem \ref{optimal_delivery_N} on which the approximation ratio of the uncoded delivery is only $K$. To see this, consider an instance of Problem \ref{optimal_delivery_N} with $K$ users where $\mathcal{W} = \{W_{k,[K] \setminus \{k\}}: k \in [K] \}$ and $\bc{W} = B$ bits for all $W \in \mathcal{W}$. Fig. \ref{p1} indicates the graph for this instance of Problem \ref{optimal_delivery_N}.
For this instance, the total number of bits sent by \black{the} conventional uncoded delivery is $KB$ while the optimal solution to Problem \ref{optimal_delivery_N} sends packet $\mathcal{\tilde P} =\{W_{k,[K] \setminus \{k\}}: k \in [K] \}$ with the total number of bits equal to $B$. Therefore, the ratio between the number of bits sent by the conventional uncoded delivery to the number of bits sent by the optimal solution to Problem \ref{optimal_delivery_N} is $K$.

\begin{figure*}[!t]
  \begin{center}
  \resizebox{0.65\textwidth}{!}{%
\begin{tikzpicture}[minimum size=20mm, node distance=1cm, every loop/.style={},
                    thick,state/.style={circle, inner sep=0pt, draw,font=\sffamily\small\bfseries}]
\node[state] (A_1) at (3,0) {$W_{1, [K] \setminus \{1\}}$};

\node[state] (A_2) at (0,0) {$W_{2, [K-1] \setminus \{2\}}$};

\node[state] (A_3) at (-4,0) {$W_{l^*-1, [l^*-2] \cup \{l^*\}}$};

\node[state] (A_l) at (-7,0) {$W_{l^*, [l^*-1]}$};

\node[state] (A_l1) at (6,0) {$W_{l^*+1,[K] \setminus \{l^*+1\}}$};
\node[state] (A_N1) at (11,0) {$W_{K-1, [K] \setminus \{K-1\}}$};
\node[state] (A_N) at (14,0) {$W_{K, [K] \setminus \{K\}}$};

\draw[every node/.style={font=\sffamily\small}]
(A_1)  edge node {}   (A_2)
(A_N1)  edge node {}   (A_N)

 (A_1) edge [bend right = 30] node {}   (A_3)
 (A_1) edge [bend right = 40] node {}   (A_l)
 (A_1) edge node {}   (A_l1)
  (A_1) edge [bend left = 35] node {}   (A_N1)
   (A_1) edge [bend left = 37] node {}   (A_N)
 (A_2) edge [bend right = 40] node {}   (A_l)
  (A_3) edge node {}   (A_l)
   (A_l1) edge [bend left = 30] node {}   (A_N1)
      (A_l1) edge [bend left = 32] node {}   (A_N);
 
 \draw[line width = 1.0mm, dash pattern=on .05mm off 2mm,
                                         line cap=round] (8,0) -- (8.6,0);                                                               
 \draw[line width = 1.0mm, dash pattern=on .05mm off 2mm,
                                         line cap=round] (-1.8,0) -- (-1.3,0);                                                                                                       

\end{tikzpicture}
}
\end{center}
\caption{{\footnotesize Instance of Problem \ref{optimal_delivery_N} for Part 2 of Theorem \ref{thm:perfromance_SID}}}
\label{p2}
\end{figure*}

\begin{figure}
\centering
  \begin{center}
   \resizebox{0.4\textwidth}{!}{%
\begin{tikzpicture}[minimum size=20mm, node distance=1cm, every loop/.style={},
                    thick,state/.style={circle, inner sep=0pt, draw,font=\sffamily\small\bfseries}]
\node[state] (A_1) at (-7,0) {$W_{1,[K] \setminus \{1\}}$};

\node[state] (A_2) at (-4,0) {$W_{2,[K] \setminus \{2\}}$};

\node[state] (A_3) at (-1,0) {$W_{3,[K] \setminus \{3\}}$};

\node[state] (A_N1) at (3,0) {$W_{K-1,[K] \setminus \{K-1\}}$};
\node[state] (A_N) at (6,0) {$W_{K,[K] \setminus \{K\}}$};

\draw[every node/.style={font=\sffamily\small}]
(A_1)  edge node {}   (A_2)
(A_2)  edge node {}   (A_3)
(A_N1)  edge node {}   (A_N)

 (A_1) edge [bend left = 40] node {}   (A_3)
 (A_1) edge [bend left = 42] node {}   (A_N1)
 (A_1) edge [bend left = 44] node {}   (A_N)
 (A_2) edge [bend left = 42] node {}   (A_N1)
 (A_2) edge [bend left = 44] node {}   (A_N)
 (A_3) edge [bend left = 40] node {}   (A_N1)
 (A_3) edge [bend left = 42] node {}   (A_N);
 
 \draw[line width = 1.0mm, dash pattern=on .05mm off 2mm,
                                         line cap=round] (0.5,0) -- (1,0);                                                               

\end{tikzpicture}
}
\end{center}
\caption{{\footnotesize Instance of Problem \ref{optimal_delivery_N} for Part 1 of Theorem \ref{thm:perfromance_SID}}}
\label{p1}
\end{figure}
\begin{figure}
  \begin{center}
   \resizebox{0.4\textwidth}{!}{%
\begin{tikzpicture}[minimum size=20mm, node distance=1cm, every loop/.style={},
                    thick,state/.style={circle, inner sep=0pt, draw,font=\sffamily\small\bfseries}]
\node[state] (A_1) at (-7,0) {$W_{1, [K] \setminus \{1\}}$};

\node[state] (A_2) at (-4,0) {$W_{2, [K] \setminus \{2\}}$};

\node[state] (A_3) at (-1,0) {$W_{3, [K] \setminus \{3\}}$};

\node[state] (A_N1) at (3,0) {$W_{K-1, [K] \setminus \{K-1\}}$};
\node[state] (A_N) at (6,0) {$W_{K, [K] \setminus \{K\}}$};

\node[state] (B_1) at (-7,-3) {$W_{1,\{2\}}$};

\node[state] (B_2) at (-4,-3) {$W_{2,\{3\}}$};

\node[state] (B_3) at (0,-3) {$W_{K-2,\{K-1\}}$};

\node[state] (B_N1) at (3,-3) {$W_{K-1,\{K\}}$};
\node[state] (B_N) at (6,-3) {$W_{K,\{1\}}$};

\draw[every node/.style={font=\sffamily\small}]
(A_1)  edge node {}   (A_2)
(A_2)  edge node {}   (A_3)
(A_N1)  edge node {}   (A_N)
(B_1)  edge node {}   (A_2)
(B_2)  edge node {}   (A_3)
(B_3)  edge node {}   (A_N1)
(B_N1)  edge node {}   (A_N)
(B_N.north)  edge node {}   (A_1.south)

 (A_1) edge [bend left = 40] node {}   (A_3)
 (A_1) edge [bend left = 42] node {}   (A_N1)
 (A_1) edge [bend left = 44] node {}   (A_N)
 (A_2) edge [bend left = 42] node {}   (A_N1)
 (A_2) edge [bend left = 44] node {}   (A_N)
 (A_3) edge [bend left = 40] node {}   (A_N1)
 (A_3) edge [bend left = 42] node {}   (A_N);
 
 \draw[line width = 1.0mm, dash pattern=on .05mm off 2mm,
                                         line cap=round] (0.5,0) -- (1,0);                                                               
                                         
\draw[line width = 1.0mm, dash pattern=on .05mm off 2mm,
                                         line cap=round] (-2.5,-4) -- (-2.0,-4);                                                                                                        

\end{tikzpicture}
}
\end{center}
\caption{{\footnotesize Instance of Problem \ref{optimal_delivery_N} for Part 3 of Theorem \ref{thm:perfromance_SID}}}
\label{p3}
\end{figure}

Next, we prove part 2 of Theorem \ref{thm:perfromance_SID} by providing an instance of Problem \ref{optimal_delivery_N} on which the approximation ratio of the greedy coded multicast (GCM) scheme is only $\lfloor \frac{K-1}{2}\rfloor$. We use $[k]$ denote the set $\{1,\ldots,k\}$. Consider an instance of Problem \ref{optimal_delivery_N} with $K$ users and $\mathcal{W}= \mathcal{W'} \cup \mathcal{W''}$ where $\mathcal{W'}= \{W_{1,[K] \setminus \{1\}}, W_{2, [K-1] \setminus \{2\}}, W_{3, [K-2] \setminus \{3\}},\ldots,
W_{l, [K-(l-1)] \setminus \{l\}}  \}$, $\mathcal{W''}= \{W_{k, [K] \setminus \{k\}}: k \in [K] \setminus [l]\}$, $\bc{W} = B$ bits for all $W \in \mathcal{W'}$, and $\bc{W} = \epsilon$ bits for all $W \in \mathcal{W''}$ such that $\epsilon \ll B$. For Problem \ref{optimal_delivery_N}, the packet $\mathcal{W'}$  is a feasible packet if $K-(l-1) \geq l$ (because we should have $[l-1] \subseteq [K-(l-1)] \setminus \{l\}$). Therefore, $\mathcal{W'}$ is feasible if $l \leq \frac{K+1}{2}$. We denote the largest value for $l$ as $l^* = \lfloor \frac{K+1}{2}\rfloor$ (note that $l$ should be an integer) and set $l = l^*$. Fig. \ref{p2} indicates the graph for this instance of Problem \ref{optimal_delivery_N}. According to GCM scheme \cite{maddah2014fundamental,maddah2015decentralized,niesen2017coded,Zhang_diff_sizes_2015}, the packets $\mathcal{P}_1 = \{W_{1,[K] \setminus \{1\}}\} \cup  \mathcal{W''}$, $\mathcal{P}_2 = \{W_{2, [K-1] \setminus \{2\}} \}$, $\mathcal{P}_3 = \{W_{3, [K-2] \setminus \{3\}} \}$, ..., and $\mathcal{P}_{l^*} = \{W_{l^*, [K-(l^*-1)] \setminus \{l^*\}} \}$ are sent with the total number of bits equal to $Bl^*$. However, since $\mathcal{W'}$ is feasible (see Fig. \ref{p2}), the optimal solution to Problem \ref{optimal_delivery_N} sends packets $\mathcal{\tilde P}_1 = \mathcal{W'}$ and $\mathcal{\tilde P}_2 = \mathcal{W''}$ with the total number of bits of $B + \epsilon$. Therefore, the ratio between the number of bits sent by the GCM scheme to the number of bits sent by the optimal solution to Problem \ref{optimal_delivery_N} is $\frac{Bl^*}{B+ \epsilon} \geq l^*-1 = \lfloor \frac{K-1}{2}\rfloor$. \black{Furthermore, under our assumption that users request different files, one can easily observe that GCC, GCLC, and HgLC schemes all will simplify to the GCC algorithm. Therefore, for the example we provided above, the ratio between the number of bits sent by these schemes to the number of bits sent by the optimal solution to Problem \ref{optimal_delivery_N} is also $\frac{Bl^*}{B+ \epsilon} \geq l^*-1 = \lfloor \frac{K-1}{2}\rfloor$.}

Finally, we prove part 3 of Theorem \ref{thm:perfromance_SID} by providing an instance of Problem \ref{optimal_delivery_N} on which the approximation ratio of the graph coloring-based coded multicast (GCCM) scheme is only $K-1$.  Consider an instance of Problem \ref{optimal_delivery_N} with $K$ users and $\mathcal{W} = \mathcal{W'} \cup \mathcal{W''}$ where 
$\mathcal{W'} = \{W_{k, [K] \setminus \{k\}}: k \in [K] \}$, 
$\mathcal{W''} = \{W_{1, \{2\}}, W_{2, \{3\}}, \ldots, W_{K-1, \{K\}}, W_{K, \{1\}}\}$, $\bc{W} = B$ bits for all $W \in \mathcal{W'}$ and $\bc{W} = \epsilon$ bits for all $W \in \mathcal{W''}$ such that $\epsilon \ll B$. Fig. \ref{p3} indicates the graph for this instance of Problem \ref{optimal_delivery_N}. For this instance of Problem \ref{optimal_delivery_N}, if we assume that the GCCM scheme \cite{ji2014order} picks vertices in the following order $W_{1, \{2\}}, W_{2, \{3\}}, \ldots, W_{K-1, \{K\}}, W_{K, \{1\}}, W_{1, [K] \setminus \{1\}}, W_{2, [K] \setminus \{2\}},\ldots,$ $W_{K, [K] \setminus \{K\}}$, then this algorithm chooses packets 
$\mathcal{P}_1 = \{W_{1,\{2\}}, W_{2, [K] \setminus \{2\}}\}$, 
$\mathcal{P}_2 = \{W_{2,\{3\}}, W_{3, [K] \setminus \{3\}}  \}$, ..., 
$\mathcal{P}_{K-1} = \{W_{K-1, \{K\}}, W_{K, [K] \setminus \{K\}} \}$, and
$\mathcal{P}_K = \{W_{K,\{1\}}, W_{1, [K] \setminus \{1\}}\}$ with the total number of bits equal to $KB$. On the other hand, for Problem \ref{optimal_delivery_N}, the packets $\mathcal{\tilde P}_1 = \{W_{1,\{2\}}\}$, $\mathcal{\tilde P}_2 = \{W_{2,\{3\}}\}$, ..., $\mathcal{\tilde P}_{K-1} = \{W_{K-1,\{K\}}\}$, $\mathcal{\tilde P}_K = \{W_{K,\{1\}}\}$, and  $\mathcal{\tilde P}_{K+1} =\mathcal{W'}$ are feasible packets. Therefore, the optimal solution to Problem \ref{optimal_delivery_N} sends these packets with the total number of bits of $K\epsilon + B$. Therefore, the ratio between the number of bits sent by the GCCM scheme to the number of bits sent by the the optimal solution to Problem \ref{optimal_delivery_N} is $\frac{KB}{B+ K\epsilon} \geq K-1$. 
\section{Proof of Theorem \ref{hardness_approximation}}
\label{proof:hardness_approximation}
First, we show that the minimum clique cover problem defined below is a special case of Problem \ref{optimal_delivery_N}.
\begin{problem}[Minimum Clique Cover Problem]
\label{clique_cover_problem}
Given graph $\mathcal{G} = (\mathcal{V}, \mathcal{E})$, find a set $\mathscr{C}$ of cliques such that every vertex $V \in \mathcal{V}$ is a member of at least one clique $\mathcal{C} \in \mathscr{C}$ and $|\mathscr{C}|$ is minimum.
\end{problem}
To this end, consider any graph $\mathcal{G} = (\mathcal{V},\mathcal{E})$ with $K$ vertices, that is, $|\mathcal{V}| = K$. We want to solve the minimum clique cover problem (Problem \ref{clique_cover_problem}) for this graph. Let's label vertices of this graph with $V_1,\ldots,V_K$. For each $k \in [K]$, define $\mathcal{A}_k =\{m: m \in [K] \setminus \{k\}, (V_k,V_m) \in \mathcal{E} \}$. That is, $\mathcal{A}_k$ includes the indices of all vertices connected to vertex $V_k$. The sets $\mathcal{A}_k$ for all $k \in [K]$ can be computed in polynomial\black{-}time in $K$. Now, consider graph $\mathcal{G}_c$ with the same vertices and edges where we have relabeled each vertex $V_k$ by $W_{k,\mathcal{A}_k}$ and all vertices have the weight of 1. Now, any clique of graph $\mathcal{G}_c$ is a feasible packet in Problem \ref{optimal_delivery_N}. This means that, graph $\mathcal{G}_c$ becomes an instance of Problem \ref{optimal_delivery_N} where $\mathcal{W} = \{W_{k,\mathcal{A}_k}: k \in [K] \}$, and since $\bc{W_{k,\mathcal{A}_k}} =1$ for all $k \in [K]$, all feasible packets of $\mathscr{P}$ have the size of 1. Therefore, solving the minimum clique cover problem for graph $\mathcal{G}$ is the same as solving Problem \ref{optimal_delivery_N} for the graph $\mathcal{G}_c$. In other words, any instance of the minimum clique cover problem can be written as a special case of Problem \ref{optimal_delivery_N}. 

Now, by contradiction, assume there is a polynomial\black{-}time (in the number $K$ of users) algorithm for Problem \ref{optimal_delivery_N} with the approximation ratio of $K^{1-\varepsilon}$ for some $\varepsilon>0$. Since we have shown that the minimum clique cover problem is a special case of Problem \ref{optimal_delivery_N}, it means that this algorithm can approximate the minimum clique cover problem with the ratio of $K^{1-\varepsilon}$ for some $\varepsilon>0$. However, this is a contradiction because we know that unless P = NP, there is no polynomial\black{-}time algorithm for \black{the} minimum clique cover problem on \black{a} $K$-vertex graph with the approximation ratio of $K^{1-\varepsilon}$ for any $\varepsilon>0$ \cite{zuckerman2006linear}. 

\end{document}